\documentclass[11pt]{article}
\usepackage{amsfonts,amssymb,amsmath,amsthm,bbm,mathrsfs}
\usepackage{graphicx, color, psfrag,hyperref,xcolor,bbm}
\usepackage{mathrsfs} 
\usepackage{tikz-cd}
\usepackage{comment}
\usepackage{hyperref}
\pdfstringdefDisableCommands{\def\eqref#1{(\ref{#1})}}

\topmargin -0.5in \textheight 9in \oddsidemargin 0.15in
\evensidemargin 0.25in \textwidth 6.15in
\usepackage[english]{babel}
\selectlanguage{english}

\allowdisplaybreaks

\usepackage{color}


\parskip=3pt plus 1pt minus 1pt

\newcommand{\ew}{\color{black}}

\numberwithin{table}{section}
\newtheorem{theorem}{Theorem}[section]
\newtheorem{lemma}[theorem]{Lemma}
\newtheorem{corollary}[theorem]{Corollary}
\newtheorem{proposition}[theorem]{Proposition}

\newtheorem{definition}[theorem]{Definition}



\def\exp{\mathop{\textrm{\rm exp}}\nolimits}              

\newcommand\blue[1]{\textcolor{blue}{}}

\newcommand{\be}{\begin{equation}}
\newcommand{\ee}{\end{equation}}


\newcommand{\trees}{\mathcal{T}}

\newcommand{\card}[1]{\left| #1 \right|}
\newcommand{\bbd}[1]{\boldsymbol{#1}}
\newcommand{\bbz}{\mathbb Z}
\newcommand{\bbr}{\mathbb R}

\newcommand{\La}{\ensuremath{\Lambda}}
\newcommand{\si}{\ensuremath{\sigma}}


\newcommand*\diff{\mathop{}\!\mathrm{d}}



\newcommand{\pro}{\mathcal{P}}

\newcommand{\B}{\mathcal{B}}

\newcommand{\empbf}[1]{{\bf \emph{#1}}}

\begin{document}

\title{{\bf Local Central Limit Theorem for unbounded long-range potentials}}
 
 \author{Eric O. Endo 
 \ew 
 \footnote{NYU-ECNU Institute of Mathematical Sciences at NYU Shanghai, 3663 Zhongshan Road North, Shanghai, 200062, China
 \newline
 email: ericossamiendo@gmail.com}\\
 Roberto Fern\'andez \footnote{ NYU-ECNU Institute of Mathematical Sciences at NYU Shanghai, 3663 Zhongshan Road North, Shanghai, 200062, China
 \newline
 email: rf87@nyu.edu }\\ 
  Vlad Margarint \footnote{University of North Carolina at Charlotte, USA
 \newline
 email: vmargari@uncc.edu }\\ 
  Tong Xuan Nguyen \footnote{NYU-ECNU Institute of Mathematical Sciences at NYU Shanghai, 3663 Zhongshan Road North, Shanghai, 200062, China
 \newline
 email:  tn2137@nyu.edu }
 }

\maketitle

\begin{center}
{\bf Abstract} 
\end{center}

We prove the equivalence between the integral central limit theorem and the local central limit theorem for two-body potentials with long-range interactions on the lattice $\mathbb{Z}^d$ for $d\ge 1$.  The spin space can be an arbitrary, possibly unbounded subset of the real axis with a suitable a-priori measure. For general unbounded spins, our method works at high-enough temperature, but for bounded spins our results hold for every temperature.  Our proof relies on the control of the integrated characteristic function, which is achieved by dividing the integration into  three different regions, following a standard approach proposed forty years ago by Campanino, Del Grosso and Tirozzi. The bounds required in the different regions are obtained through cluster-expansion techniques.  For bounded spins, the arbitrariness of the temperature is achieved through a decimation (``dilution") technique, also introduced in the later reference.

\vspace{2cm}

{\footnotesize
 {\em  AMS 2000 subject classification}: Primary-  ; secondary- 

{\em Keywords and phrases}: Local central limit theorem, unbounded potentials, long-range potentials
}

\section{Introduction}

The study of the equivalence between the integral central limit theorem and the local central limit theorem for spin systems  has a long and illustrious history \cite{CCT,CDT,Del Grosso,DT,PS}.  Its interest stems mainly in relation with the issue of  equivalence of the ensembles. Previous references, however, included different assumptions and restrictions that were largely lifted in two recent references. The work in \cite{CDT,Del Grosso,DT} considered finite-range potentials. While \cite{Del Grosso,DT} concentrated on finite and compact spins, \cite{CDT} focused on spins on the set of integer or real numbers; In \cite{CCT} the results was generalized to a family of finite-spins long-range potentials with technical constraint; For non-zero external field, see \cite{IS}; In \cite{EM} some of us recently proved the equivalence for all absolutely summable and translation-invariant potentials at sufficiently large temperature, removing the constraint in \cite{CCT}.
 For bounded spins, the result is further extended to \emph{all} temperatures in the recent work \cite{PS} where also the condition of the translation-invariance of the potential is also removed.  The remaining limitation refers to the structure of the spin space:  Reference \cite{PS} applies to finite discrete-spin spaces, while the work in \cite{EM} assumes absolutely summability of the interaction, which excludes Ising-type interactions with real-valued spins.  

In the present work we prove the equivalence for general, possibly unbounded spins with long-range two-body interactions without assuming translation invariance of the potential.  Following Procacci's and Scoppola's \cite{PS}, using their ``dilution" technique (explained below) we extend the results to arbitrary temperatures when the spins are bounded.  The reader will also recognize some common (natural) combinatorial bounds employed by that reference.  Nevertheless we resort to a slightly different cluster expansion that leds, in our opinion, to a slightly more efficient treatment.  We believe our results to be the most general possible in the standard statistical mechanical framework of two-body spin interactions.

\section{Set up}

\subsection{Continuous spin systems}

In this section we introduce the model we study. While our combinatorial approach is common to all spin spaces, relevant bounds lead to a distinction between the case of bounded and unbounded spins. Our exposition will address mainly the more involved unbounded case, indicating in each case the relevant particularities to the bounded case.  For unbounded spins, the Hamiltonian will be defined following \cite{COPP}, where the authors give criteria for the boundary conditions to guarantee the equivalence between the thermodynamic limits of finite volume Gibbs measures with external boundary and measures satisfying the DLR equations. Here, we consider a slightly more restricted set of boundary conditions.  Of course, no restriction is needed for the boundary conditions of the bounded-spin case. Here are the ingredients of our setup.

\paragraph{Configuration space:}
We consider the set $\Omega = E^{\bbz^d}$, where $E=\bbr$ for the unbounded-spin case, and $E\subset[-R,R]$, for some $R>0$, for the bounded case and $d\ge 1$. Configurations are denoted with greek letters, $\sigma=(\sigma_x)_{x\in \mathbb{Z}^d}\in \Omega$, with the spins $\sigma_x$ being real numbers. The set $\Omega$ is a topological space with product topology inherited from $\bbr$, it is a Polish space, i.e., it is metrizable with a metric for which it is separable and complete.  

\paragraph{A-priori single-spin measure:}
The single-spin space $E$ is equipped with a Borel measure, denoted  $\mathrm{d}\sigma_x$ for each spin $\si_x$ with $x\in \mathbb{Z}^d$.  In our exposition we assume that this measure is the usual Lebesgue measure in both the bounded and unbounded case.  With minor changes (needed only in Section \ref{sec:hard} due to the use of integration by parts), our proofs apply also to the discrete case, in which $d\sigma_x$ is the counting measure.

The \emph{a-priori} measure $\nu$ on $S$ incorporates an exponential weight and takes the form  
\[
\nu(\diff \sigma_x)=e^{-F(\si_x)}  \diff \sigma_x\;.
\]
where $F:\bbr \to \bbr$ is ``sufficiently well behaved''.  
For the bounded case, a possible choice is $F=0$, but it can also incorporate single-spin terms of the interaction, such as magnetic-field terms. 
For the proofs below we need the following properties:
\begin{itemize}
\item $\nu$ is a finite measure.
\item $F$ is differentiable except in finitely many points (we use integration by parts!).
\item $\card{F'}$ has sub-Gaussian behavior at infinity (trivially true in the bounded case).
\end{itemize}

\paragraph{Interactions and Hamiltonians:}
We assume the two-body interactions of the form $J_{xy}\sigma_x\sigma_y$ with $x\neq y\in \mathbb{Z}^d$ such that the couplings $J_{xy}$ are
\begin{itemize}
\item symmetric: $J_{xy}=J_{yx}$ and
\item absolutely summable: 
\be\label{abssum}
\sum_{y\in \bbz^d}|J_{xy}|<\infty \text{ for every $x$.}
\ee
\end{itemize}

For a given finite set $\Lambda \Subset \mathbb{Z}^d$, consider $\Omega_{\Lambda}=\mathbb{R}^{\Lambda}$ be the set of configurations in $\Lambda$. The Hamiltonian with free boundary condition on $\Omega_{\Lambda}$ is defined by
\[
H_{\Lambda}(\sigma)=-\sum_{x,y\in \Lambda}J_{xy}\sigma_x\sigma_y,
\]
and $H_{\Lambda}(\sigma)=0$ if $|\Lambda|=1$.

\paragraph{Stability assumption:} This requirement is superfluous for the bounded-spin case.  Our approach requires good integrability properties of the combined exponential weight.
\begin{definition}[Superstability]\label{superstability}
There exist $A>0$ and $c\in \mathbb{R}$ such that, for every $\Lambda\Subset \mathbb{Z}^d$, $\beta>0$ and  $\sigma_{\Lambda}\in \mathbb{R}^{\Lambda}$,
\be\label{eq:superstability}
\sum_{x\in \Lambda}F(\sigma_x)+\beta H_{\Lambda}(\sigma) \ge \sum_{x\in \Lambda}(A\sigma_x^2-c).
\ee
\end{definition}
Note that, for $\Lambda=\{x\}$, the superstability implies 
\be
F(\sigma_x)\ge A\sigma^2_x -c
\ee
for every $x\in \mathbb{Z}^d$, so the a priori measure is bounded by a Gaussian.  In particular this implies that functions of the form $\card{\sigma_x}^\ell$, $\ell>0$; $e^{c\card{\sigma_x}}$ ---and the same multiplied by $\card{F'}$--- are all $\nu$-integrable.

\paragraph{Boundary conditions:} In the unbounded case, it is well known that not every boundary condition can be allowed in order to construct a physically meaningful statistical mechanics. 

\begin{definition}[Strongly tempered boundary condition]\label{tempered}
We say that a boundary condition $\omega \in \Omega$ is \emph{strongly tempered} if
\be\label{eq:temp}
\sup_{\Lambda \Subset \mathbb{Z}^d}\sup_{x\in \Lambda}\sum_{y\in \Lambda^c}|J_{xy} \omega_y|\;=:\;\vartheta(\omega) <\infty
\ee
and we denote the set of strong tempered boundary condition by $\Omega^{\text{temp}}$. 
\end{definition}
This condition is more restrictive than the ones considered in \cite{COPP,MN}, but it is general enough and easier to manipulate.  
In particular, we have added a ``sup'' in \eqref{eq:temp} because the proof of our main result requires the function $\vartheta$ not to depend on $\Lambda$.

In the bounded case, \emph{all} boundary conditions are acceptable, and we have
\be
\sup_{\Lambda \Subset \mathbb{Z}^d}\sup_{x\in \Lambda}\sum_{y\in \Lambda^c}|J_{xy} \omega_y|
\;\le\; R \sum_{y\in \Lambda^c}|J_{xy}| \;=:\; \vartheta
\ee

\paragraph{Finite-volume Gibbs measures:}
The Hamiltonian at finite volume $\Lambda$ with strong tempered boundary condition $\omega$ is given by
\[
H^{\omega}_{\Lambda}(\si)= -\sum_{x,y\in \Lambda}J_{xy}\sigma_x\sigma_y-\sum_{\substack{x\in \Lambda \\ y\in \La^c}} J_{xy}\si_x\omega_y.
\]
The Gibbs measure at finite volume $\Lambda$ with  strong tempered boundary condition $\omega$ is defined on $\Omega_{\Lambda}$ by
\be\label{gibbsmeasure}
\mu^{\omega}_{\Lambda,\beta}(\diff \sigma)= \frac{1}{Z^{\omega}_{\Lambda,\beta}} e^{-\beta H^{\omega}_{\Lambda}(\sigma)}\nu(\diff \si_{\Lambda}),
\ee
where, here, we are  considering $\nu$ to be the product measure
\[
\nu(\diff \si_{\Lambda})=\prod_{x\in \La}\nu(\diff \si_x ),
\]
and the partition function is given by
\[
Z^{\omega}_{\La,\beta}=\int_{\Omega_{\La}}e^{-\beta H^{\omega}_{\Lambda}(\sigma)}\nu(\diff \si_{\La}).
\]


\subsection{The Integral and the Local Central Limit Theorems}

Consider the finite cube $\Lambda_k\Subset \mathbb{Z}^d$ defined by $\Lambda_k=[-k,k]^d$. Define, for a given configuration $\sigma\in \Omega_{\Lambda_k}$,
\be
S_k = \sum_{x\in \Lambda_k}\sigma_x \quad \text{ and }\quad \bar{S}_k=\frac{S_k-\mu^\omega_{\Lambda_k,\beta}(S_k)}{\sqrt{D_k}},
\ee
where $D_k=\mu^\omega_{\Lambda_k,\beta}((S_k-\mu^\omega_{\Lambda_k,\beta}(S_k))^2)$ denotes the variance of $S_k$ under the Gibbs measure $\mu^\omega_{\Lambda_k,\beta}$.

\begin{definition}
Let $(\Lambda_k)_{k\ge 1}$ be a sequence of finite set of cubes in $\mathbb{Z}^d$ and $\omega$ be a strong tempered boundary condition. We say that a sequence of Gibbs measures $(\mu^{\omega}_{\Lambda_k,\beta})_{k\ge 1}$ satisfies the \emph{integral central limit theorem} if the following three conditions are satisfied:
\begin{enumerate}
\item[(i)] $\displaystyle\lim_{k\to \infty}D_k/|\Lambda_k|=L$,
\item[(ii)] $L>0$,
\item[(iii)] For every $\tau\in \mathbb{R}$,
\be
\lim_{k\to \infty}\mu^{\omega}_{\Lambda_k,\beta}(\bar{S}_k\le \tau) = \frac{1}{\sqrt{2\pi}}\int_{-\infty}^{\tau} e^{-z^2/2}\diff z.
\ee
\end{enumerate}
\end{definition}

To define the local central limit theorem, we need to introduce the probability density $p^{\omega}_{\Lambda,\beta}$ given by
\begin{equation}
\mu^{\omega}_{\Lambda_k,\beta}(\bar{S}_k \in A) = \int_A p^{\omega}_{\Lambda_k,\beta}(x)\diff x,
\end{equation}
where $A$ is an element in a Borel set of $\mathbb{R}$. Note that, by the Fourier inversion formula,
\be \label{eq:fourier}
p^{\omega}_{\Lambda_k,\beta}(x) = \frac{1}{2\pi} \int_{-\infty}^{\infty}\mu^{\omega}_{\Lambda_k,\beta}(e^{it\bar{S}_k}) \,e^{itx}\diff t.
\ee

 \begin{definition}\label{def:lclt}
 Let $(\Lambda_k)_{k\ge 1}$ be a sequence of finite set of cubes in $\mathbb{Z}^d$ and $\omega$ be a strong tempered boundary condition. 
We say that a sequence of Gibbs measures $(\mu^{\omega}_{\Lambda_k,\beta})_{k\ge 1}$ satisfies the \emph{local central limit theorem} if the following three conditions are satisfied:
\begin{enumerate}
\item[(i)] $\displaystyle\lim_{k\to \infty}D_k/|\Lambda_k|=L$,
\item[(ii)] $L>0$,
\item[(iii)] The probability density $p^{\omega}_{\Lambda_k,\beta}$ satisfies
\be \label{eq:lclt}
\lim_{k\to \infty}\sup_{x\in \mathbb{R}} \left|p^{\omega}_{\Lambda_k,\beta}(x) -\frac{1}{\sqrt{2\pi}}e^{-x^2/2}  \right|=0.
\ee
\end{enumerate}
\end{definition}

\section{Main result and guidelines of its proof}
The following is our main result.

\begin{theorem}\label{main}
Let $(\Lambda_k)_{k\ge 1}$ be a sequence of finite set of cubes in $\mathbb{Z}^d$ for $d\ge 1$ and $\beta>0$. Assume that (i) the spins are bounded and $\beta>0$ (ii) the spins are unbounded, the boundary condition $\omega$ is strongly tempered, and $\beta(\omega)$ is sufficiently small. If a sequence of Gibbs measures $(\mu^{\omega}_{\Lambda_k,\beta})_{k \ge 1}$ satisfies the Integral Central Limit Theorem, then that sequence satisfies the Local Central Limit Theorem
\end{theorem}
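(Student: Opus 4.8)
The plan is to follow the classical Campanino--Del Grosso--Tirozzi strategy of controlling the characteristic function $\phi_k(t):=\mu^\omega_{\Lambda_k,\beta}(e^{it\bar S_k})$ on three separate regions of the real line, and then feed the resulting bounds into the Fourier inversion formula~\eqref{eq:fourier}. Writing
\[
p^\omega_{\Lambda_k,\beta}(x)-\frac{1}{\sqrt{2\pi}}e^{-x^2/2}
=\frac{1}{2\pi}\int_{-\infty}^{\infty}\Bigl(\phi_k(t)-e^{-t^2/2}\Bigr)e^{itx}\diff t,
\]
it suffices to show $\int_{-\infty}^{\infty}\bigl|\phi_k(t)-e^{-t^2/2}\bigr|\diff t\to 0$, since the bound is then uniform in $x$. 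I would split the integral at scales $|t|\le \delta$, $\delta\le|t|\le T\sqrt{D_k/|\Lambda_k|}\cdot(\text{const})$ (more precisely $\delta\le |t|\le A\sqrt{D_k}$ for a suitable small $A$ after rescaling), and $|t|\ge A\sqrt{D_k}$ --- equivalently, in terms of the unscaled variable $s=t/\sqrt{D_k}$, the regions $|s|$ small, $|s|$ of order one, and $|s|$ near $\pm\pi$-type large values. On the small-$t$ region the integral CLT hypothesis already gives pointwise convergence $\phi_k(t)\to e^{-t^2/2}$ together with a Gaussian-type domination coming from the variance normalization, so dominated convergence handles it; the only input needed is a uniform quadratic upper bound $|\phi_k(t)|\le e^{-t^2/4}$ for $|t|\le\delta$, which follows from a second-order Taylor expansion controlled by the convergence $D_k/|\Lambda_k|\to L>0$.

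The substance is in the intermediate and large regions, and this is where the cluster expansion enters. For $|t|$ in the intermediate range I would show $|\phi_k(t)|\le e^{-c\,t^2}$ uniformly for some $c>0$ (so that $\int_{\delta\le|t|\le A\sqrt{D_k}}|\phi_k(t)|\diff t\to 0$ exponentially), and for the large range $|t|\ge A\sqrt{D_k}$ I would show the stronger decay $|\phi_k(t)|\le e^{-c|\Lambda_k|}$, or at least something summable-enough after the change of variables, giving a contribution that is $o(1)$. Both bounds should come from expressing $\phi_k(t)=Z^\omega_{\Lambda_k,\beta}(t)/Z^\omega_{\Lambda_k,\beta}$ where $Z(t)$ is the partition function with the extra imaginary one-body term $it\sigma_x/\sqrt{D_k}$ (up to the deterministic factor $e^{-it\mu(S_k)/\sqrt{D_k}}$ of modulus one), taking $\log$ of the ratio, and running a convergent polymer/cluster expansion for $\log Z(t)-\log Z$. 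The per-site ``activity'' acquires a factor involving $e^{it\sigma_x/\sqrt{D_k}}$; integrating the single-site Gaussian-dominated measure against $e^{is\sigma_x}$ produces, for bounded spins or under superstability, a single-site characteristic function that is strictly less than $1$ in modulus away from $s=0$ and decays, and it is the product of these single-site gains over the $|\Lambda_k|$ sites that yields the exponential-in-volume bound in the large region, while in the intermediate region a more careful accounting (only $O(t^2)$ gain per relevant block) gives the Gaussian-type bound. For the unbounded case the convergence of the cluster expansion is exactly why $\beta(\omega)$ must be small: the combinatorial/Kotecky--Preiss criterion needs the effective coupling strength, which includes the tempered-boundary contribution $\vartheta(\omega)$ and the Gaussian tails controlled by the superstability constant $A$, to be below a threshold.

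For the bounded-spin case at arbitrary temperature the straightforward cluster expansion diverges, so I would insert the Procacci--Scoppola ``dilution/decimation'' step before expanding: partition $\Lambda_k$ into blocks of a large but fixed side-length $\ell$, integrate out (sum over) the spins inside each block conditionally, and treat the resulting block variables as new effective spins with an induced single-block measure and induced block--block interactions whose effective coupling, after the decimation, is small because the original $J_{xy}$ are absolutely summable and the pair sums over disjoint distant blocks are $O(\ell^{-\alpha})$-ish small --- in any case small enough, for $\ell$ large, to make the block-level cluster expansion converge regardless of $\beta$. The three-region bounds are then run at the block level; one must check that the block characteristic function still has the strict modulus gain away from $s=0$ (true because each block contains boundedly many bounded spins, so the block variable has a genuinely spread-out law unless the dynamics is degenerate, and the integral-CLT hypothesis $L>0$ rules out degeneracy), and that passing from block-sums back to $S_k$ costs only a controlled boundary term. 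I expect the main obstacle to be the intermediate region: getting a clean, uniform $e^{-c t^2}$ bound on $|\phi_k(t)|$ there --- matching the Gaussian so the pointwise limit from the integral CLT can take over --- requires showing that the cluster expansion's $t$-dependence really is genuinely quadratic-with-a-gain on that whole range and does not merely degrade to $|\phi_k(t)|\le 1$; handling the $\sqrt{D_k}$-rescaling uniformly in $k$ (using $D_k\asymp|\Lambda_k|$) and interfacing it with the decimation scale $\ell$ in the bounded case is the delicate bookkeeping.
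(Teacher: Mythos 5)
Your overall strategy --- reduce to $L^1$ convergence of the characteristic function, dilute to an effective high-temperature system, run a convergent cluster expansion, and split the frequency axis into regions handled by separate arguments --- is indeed the architecture of the paper's proof. However, there are two concrete points where the proposal, as written, would not go through.

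First, and most seriously, your frequency decomposition leaves a gap exactly where the paper introduces its new ingredient. Your ``intermediate'' region $\delta\le |t|\le A\sqrt{D_k}$ (with $A$ small) can indeed be handled by a Taylor/Gaussian bound on the single-site factor $\nu^\omega_{\tilde\Lambda_k,\beta}(e^{it\sigma_x/\sqrt{D_k}})$, exactly as in Lemma~\ref{lem:prodnu}: the second-order Taylor argument requires the rescaled frequency $s=t/\sqrt{D_k}$ to be \emph{small}, i.e. $|t|\le \delta\sqrt{D_k}$ with $\delta$ small. But your ``large'' region $|t|\ge A\sqrt{D_k}$ with $A$ small cannot be disposed of by a single estimate. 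The integration-by-parts bound that gives per-site decay $\gamma\sqrt{D_k}/|t|$ (Lemma~\ref{lem:hard}) is useful only once $|t|\ge T\sqrt{D_k}$ with $T>\gamma$, because only then is the per-site factor strictly below $1$; and the factor $1/|t|$ itself is essential for integrability over the unbounded tail, so the bound you propose, $|\phi_k(t)|\le e^{-c|\Lambda_k|}$ for all $|t|\ge A\sqrt{D_k}$, would produce a divergent integral. Between these two regimes, on $\delta\sqrt{D_k}\le |t|\le T\sqrt{D_k}$, neither tool applies, and one needs a dedicated argument showing that the per-site characteristic function is \emph{uniformly} bounded away from $1$ for rescaled frequencies of order one. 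That is Lemma~\ref{lem:crucial}: writing $|\nu(e^{is\sigma_x})|^2 = 1 - 2\iint \sin^2\bigl(\tfrac{s}{2}(\sigma_x-\sigma_y)\bigr)\,d\nu\, d\nu$, restricting the double integral to $\{2/T\le|\sigma_x-\sigma_y|\le\pi/T\}$, and using the lower bound $\pi^\omega_-$ on the measure yields a strict gap $e^{-\eta}$ per site, hence exponential-in-volume decay. You sense the difficulty (``I expect the main obstacle to be the intermediate region\dots''), but do not supply the missing estimate; a uniform $e^{-ct^2}$ bound via Taylor simply does not hold on the whole range $\delta\le|t|\le A\sqrt{D_k}$ unless $A$ is small, and then the gap reappears for $A\sqrt{D_k}\le|t|\le T\sqrt{D_k}$.

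Second, your decimation differs from the paper's dilution, and the variant you propose creates a real complication. You suggest integrating out the spins inside each block to obtain effective block variables with induced block--block couplings. But tracing out interior spins in a pairwise-interacting system generically produces \emph{multi-block} effective interactions, not merely renormalized two-body ones; this would destroy the $\mathcal P_{1,2}$ polymer structure on which the entire cluster expansion of Section~\ref{sec.proof-a} rests, and your claim that the effective block coupling is small ``because the original $J_{xy}$ are absolutely summable'' does not apply to these induced many-body terms. The paper's dilution (after Procacci--Scoppola) avoids this entirely: one conditions on the spins \emph{off} the sparse sublattice $\mathbb Z^d_{r_0}$ and takes a supremum over those boundary values. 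The conditional law of the retained spins is again a Gibbs measure with the \emph{same} two-body interaction restricted to $\mathbb Z^d_{r_0}$, plus a one-body boundary term absorbed into $F^\omega_{\Lambda,\beta}$; the effective inverse temperature $\widehat\beta_{r_0}=\beta\sup_x\sum_{y\in\mathbb Z^d_{r_0}}|J_{xy}|$ is then small for $r_0$ large purely by absolute summability. If you insist on block-spin decimation you would need an additional argument controlling the induced multi-body terms, which is nontrivial and not sketched in your proposal.
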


It is simple to see that the validity of the local CLT implies the validity of the integrated one.  Therefore the previous theorem proves equivalence between both theorems.
The proof is based on the following guidelines.

\paragraph{$L^1$ convergence of the characteristic function:}
Due to \eqref{eq:fourier},
\be
p^{\omega}_{\Lambda_k,\beta}(x) -\frac{1}{\sqrt{2\pi}} e^{-x^2/2} 
= \frac{1}{2\pi}\int \left[ \mu^{\omega}_{\Lambda_k,\beta}(e^{it\bar{S}_k}) - e^{-t^2/2} \right]\, e^{itx}\diff t\;.
\ee
Thus, a sufficient condition to prove (iii) of the local central limit theorem is
\be \label{eq:sufficient}
\lim_{k\to \infty} \int \left| \mu^{\omega}_{\Lambda_k,\beta}(e^{it\bar{S}_k}) - e^{-t^2/2} \right|\diff t\;=\; 0\;.
\ee
We see that, while the validity of the Central Limit Theorem is equivalent to the pointwise convergence 
\be \label{eq:cltr}
\lim_{k\to \infty} \left| \mu^{\omega}_{\Lambda_k,\beta}(e^{it\bar{S}_k}) - e^{-t^2/2} \right|\;=\; 0\;,
\ee
the local central limit follows from the corresponding $L^1$ convergence.

\paragraph{Decomposition into infrared and ultraviolet terms:} 
The passage from pointwise to $L^1$ convergence amounts to controlling the integral for large values of the ``frequency" $t$.  Therefore is natural to resort to a decomposition of the form
\be\label{eq:decomposition-1}
T_k\;:=\; \int \left| \mu^{\omega}_{\Lambda_k,\beta}(e^{it\bar{S}_k}) - e^{-t^2/2} \right|\diff t\;=\; T_k^{\le B}+ T_k^{>B}
\ee
with
\be
\begin{array}{c} T_k^{\le B}\\ T_k^{>B} \end{array} \;=\;
\begin{array}{c} \int_{-B}^B\\ \int_{[-B,B]^c} \end{array} 
\left| \mu^{\omega}_{\Lambda_k,\beta}(e^{it\bar{S}_k}) - e^{-t^2/2} \right|\diff t\;.
\ee
The validity of the central limit theorem implies that $T_k^{\le B}\to 0$ with $k$, whichever $B$, so we have to take care of the ``ultraviolet" part  $T_k^{> B}$.  In turns, the latter will be controlled through the inequality
\be\label{eq:further}
T_k^{>B} \;\le\; \int_{|t|\ge B} e^{-t^2/2}\diff t + \int_{|t|\ge B} \left|\mu^{\omega}_{\Lambda_k,\beta}(e^{it\bar{S}_k})\right| \diff t\;.
\ee
The first term on the right can be made arbitrarily small by choosing $B$ sufficiently large.  Hence, the proof of \eqref{eq:lclt} boils down to proving that the second term converges to zero as $k\to\infty$ for (sufficiently large) $B>0$.  In fact we will prove that
\be\label{boildown}
\lim_{k\to \infty}  \int \left|\mu^{\omega}_{\Lambda_k,\beta}(e^{it\bar{S}_k})\right| \diff t\;=\;0\;.
\ee

\paragraph{Further decomposition into soft, medium and hard ultraviolet terms:} 
The definition of $\bar{S}_k$ shows that the value of $\mu^{\omega}_{\Lambda_k,\beta}(e^{it\bar{S}_k})$ depends on the scaled frequencies $t/\sqrt{D_k}$.  The necessary bounds require to split the analysis in three frequency domains.  They are the following:
\begin{itemize}
\item \emph{Soft frequencies:} Defined by the condition $B\le |t|\le \delta \sqrt{D_k}$ for $\delta$ sufficiently small.  In this domain $\left|\mu^{\omega}_{\Lambda_k,\beta}(e^{it\bar{S}_k})\right|$ is controlled by the lowest order of $e^{it\bar{S}_k}$ leading to non-trivial contributions (second order).
\item \emph{Hard frequencies:} Defined by the condition $|t|>T\sqrt{D_k}$ for $T$ sufficiently large.  In this domain $\left|\mu^{\omega}_{\Lambda_k,\beta}(e^{it\bar{S}_k})\right|$ is controlled by the exponential factors $F(\sigma_x)$. 
This term is absent if spins are either discrete and bounded or lattice-distributed, as was the case in previous work \cite{CDT,EM, PS}  
\item \emph{Medium frequencies:} Defined by the condition $\delta\sqrt{D_k} \le |t|\le T\sqrt{D_k}$ wih $\delta$ and $T$ determined above.  The control of $\left|\mu^{\omega}_{\Lambda_k,\beta}(e^{it\bar{S}_k})\right|$ is a consequence of some trigonometric magic.
\end{itemize}
The first two regions were already introduced in \cite{CDT} and used in several subsequent articles \cite{CCT, EM, PS}.  The last region is a novel contribution of our work.  

Formally we decompose
\be\label{eq:dec}
\int_{|t|\ge B} \left|\mu^{\omega}_{\Lambda_k,\beta}(e^{it\bar{S}_k})\right| \diff t\;=\; S_k\bigl(\{B\le |t|\le \delta \sqrt{D_k}\}\bigr) +
    S_k\bigl(\{\delta\sqrt{D_k} \le |t|\le T\sqrt{D_k}\}\bigr) + S_k\bigl(|t|\ge T\sqrt{D_k}\}\bigr) 
\ee
with
\be\label{eq:dec-a}
S_k(A)\;=\; \int_A \left|\mu^{\omega}_{\Lambda_k,\beta}(e^{it\bar{S}_k})\right| \diff t\;.
\ee
Our proof will show:
\begin{itemize}
\item[(a)] For $\delta$ is sufficiently small, $S_k\bigl(\{B\le |t|\le \delta \sqrt{D_k}\}\bigr) \le H_\delta(B)$, uniformly in $k$,  with $H_\delta(B) \to 0$ as $B\to \infty$.
\item[(b)] For $T$ sufficiently large, $S_k\bigl(|t|\ge T\sqrt{D_k}\}\bigr) \to 0$ as $k\to \infty$.
\item[(c)]  $S_k\bigl(\{\delta\sqrt{D_k} \le |t|\le T\sqrt{D_k}\}\bigr) \to 0$ as $k\to\infty$ for all $0<\delta<T$,.
\end{itemize}

\paragraph{Lattice dilution:} The last ingredient of our toolbox, taken from \cite{PS}, is the passing to diluted (scaled) lattices $\mathbb{Z}^d_{r_0} $, 
for integers $r_0\ge 1$, defined as
\be
\mathbb{Z}^d_{r_0} = \{(r_0 x_1,r_0 x_2,\ldots,r_0 x_d): (x_1,\ldots,x_d)\in \mathbb{Z}^d\}\,
\ee
that is, to lattices where nearest-neighbor sites are at a distance $r_0$. Dilution has a double advantage:
\smallskip\par\noindent
\emph{{\bf 1)} Expectations in the original lattice can be bounded by the (simpler) expectations in the diluted lattice}.  Indeed, 
defining $\tilde{\Lambda}_k =\tilde{\Lambda}_k(r_0) = \Lambda_k \cap \mathbb{Z}^d_{r_0}$ we have, by spatial Markov property,
\begin{align*}
\left| \mu^{\omega}_{\Lambda_k,\beta}(e^{it\bar{S}_k})\right|
&=\left|\mu^{\omega}_{\Lambda_k,\beta}\left(\exp\left( \frac{it}{\sqrt{D_k}}S_k \right) \right)\right| \\
&= \left|  \mu^{\omega}_{\Lambda_k,\beta}\left( \mu^{\omega}_{\Lambda_k,\beta} \left( \exp\left(\frac{it}{\sqrt{D_k}}S_k\right) \Bigg| \sigma_{\Lambda_k\setminus \tilde{\Lambda}_k}=\omega'_{\Lambda_k\setminus \tilde{\Lambda}_k}\right) \right)\right|\\
&\le \sup_{\omega'\in \Omega^{\text{temp}}} \left| \mu^{\omega}_{\Lambda_k,\beta} \left( \exp\left(\frac{it}{\sqrt{D_k}}\sum_{x\in \tilde{\Lambda}_k}\sigma_x\right) \Bigg| \sigma_{\Lambda_k\setminus \tilde{\Lambda}_k}=\omega'_{\Lambda_k\setminus \tilde{\Lambda}_k}\right) \right|\\
&= \sup_{\omega'\in \Omega^{\text{temp}}} \left|\mu^{(\omega\lor \omega')^{r_0}}_{\tilde{\Lambda}_k,\beta}\left(\exp\left( \frac{it}{\sqrt{D_k}}\sum_{x\in \tilde{\Lambda}_k}\sigma_x \right)  \right)\right|,
\end{align*}
where the configuration $(\omega\lor \omega')^{r_0}$ is defined by
$$
(\omega\lor \omega')^{r_0}(x) = 
\begin{cases}
\omega'(x) & \text{ if }x\in \Lambda_k \setminus \tilde{\Lambda}_k\\
\omega(x) & \text{ if }x\in \Lambda^c_k
\end{cases}.
$$
Note that $(\omega\lor \omega')^{r_0}$ is strong tempered since both $\omega$ and $\omega'$ are strong tempered. Thus, we can bound the characteristic function of $\bar{S}_k$ by
\be\label{eq:bound-rfr}
\left| \mu^{\omega}_{\Lambda_k,\beta}(e^{it\bar{S}_k})\right| \le  \sup_{\omega\in \Omega^{\text{temp}}}\left|\mu^{\omega}_{\tilde{\Lambda}_k,\beta}\left(\exp\left( \frac{it}{\sqrt{D_k}}\sum_{x\in \tilde{\Lambda}_k}\sigma_x \right)  \right)\right|.
\ee
\smallskip\par\noindent
\emph{{\bf 2)} In the bounded case, dilution allows us to apply high-temperature techniques for arbitrary temperatures.}
Indeed, the diluted expectation in the right-hand side of \eqref{eq:bound-rfr} corresponds to spins subjected to an effective temperature
\be
\widehat{\beta}_{r_0}:=\beta\sup_{x\in \mathbb{Z}^d_{r_0}}\sum_{y\in\mathbb{Z}^d_{r_0}}|J_{xy}|.
\ee
As the scaling factor $r_0$ grows, the value of $\widehat{\beta}_{r_0}$ decreases because of the absolute summability of the two-body interaction.  Hence a sufficiently large dilution $r_0$ brings a system with any $\beta>0$ into a system with arbitrarily high temperature and, hence, allows the application of high-temperature cluster expansions.

\paragraph{No dilution in the unbounded case:} 
In the unbounded case, the dependence on boundary conditions in our bounds may lead to a diverging supremum in \eqref{eq:bound-rfr}.  This makes the dilution technique not applicable in this case, and in all the expressions below we must take $r_0=1$ and, thus, $\tilde{\Lambda}_k=\Lambda_k$.

\section{Proof of Theorem \ref{main}. (A) Basic expressions}\label{sec.proof-a}

\subsection{Partition functions in terms of polymers}
Our basic expression is
\be\label{ratio}
\mu^{\omega}_{\tilde{\Lambda}_k,\beta}\left(\exp\left( \frac{it}{\sqrt{D_k}}\sum_{x\in \tilde{\Lambda}_k}\sigma_x \right)  \right)\;=\;
\frac{Z^{\omega}_{\tilde{\Lambda}_k,\beta}(t)}{Z^{\omega}_{\tilde{\Lambda}_k,\beta}(0)}
\ee
where
\begin{equation}
    Z^{\omega}_{\tilde{\Lambda}_k,\beta}(t) =\int_{\mathbb R^{|\tilde{\Lambda}_k|}}\exp\left(-\beta H^{\omega}_{\tilde{\Lambda}_k}(\sigma)+\frac{it}{\sqrt{D_k}}\sum_{x\in \tilde{\Lambda}_k}\sigma_x\right)\prod_{x\in\tilde{\Lambda}_k}\nu(\diff\sigma_x).
\end{equation}
Note that $Z^{\omega}_{\Lambda_k,\beta}(0)=Z^{\omega}_{\Lambda_k,\beta}$. This ratio of partition functions will be bounded through the (high-temperature) cluster expansion.  
As reviewed in Appendix \ref{sec:cluster-exp}, the starting step to set a cluster expansion is to expand the partition function in terms of products of non-intersecting (compatible) objects called \emph{polymers}.   For one and two-body interactions, the polymers are defined as follows.

Denote $\mathcal{P}_{1,2}$ be a family of non-empty subsets $X\Subset \mathbb{Z}^d$ consisting of at most two points, that is, $1\le |X|\le 2$. A \emph{polymer} $R$ is a finite set $\{X_1,\ldots,X_p\}$ of elements in $\mathcal{P}_{1,2}$ that are connected in the following sense: for any $X_l,X_m \in R$, there exist a sequence $X_{k_1},\ldots,X_{k_q}$ of elements in $R$ such that $X_{k_1}=X_l$, $X_{k_q}=X_m$, and $X_{k_j}\cap X_{k_{j+1}}\neq \emptyset$. Let $\mathcal{R}$ be the set of all polymers and, if $R\in \mathcal{R}$, denote by $\underline{R}$ the support of the polymer $R$ given by $\underline{R}=\bigcup_{X\in R}X$.  We then have
\begin{equation}\label{eq:Xuan268_1}
Z^{\omega}_{\tilde{\Lambda}_k,\beta}(t) = \left(\prod_{x\in \tilde{\Lambda}_k}\int_{\mathbb R}e^{-F^{\omega}_{\tilde{\Lambda}_k,\beta}(\sigma_x)}\diff \sigma_x\right)\left(1+\sum_{n=1}^\infty\sum_{(R_1,\ldots, R_n)\in\mathcal{R}^n}\prod_{1\le i<j\le n}\mathbbm{1}_{\underline{R}_i\cap\underline{R}_j=\emptyset}\prod_{i=1}^n\zeta^{(1)}_{t,r_0}(R)\right),
\end{equation}
where the activity function $\zeta^{(1)}_{t,r_0}$ is given by
\begin{equation}
\zeta^{(1)}_{t,r_0}(R)=\int_{\mathbb R^{|\underline{R}|}}\prod_{\{x\} \in R} \left(\exp\left(\frac{it\sigma_x}{\sqrt{D_k}}\right)-1\right)\prod_{\{x,y\}\in R}\left(e^{\beta J_{xy}\sigma_x\sigma_y}-1\right)\prod_{x\in\underline{R}}\nu^{\omega}_{\tilde{\Lambda}_k,\beta}(\diff\sigma_x)\;.
\end{equation}
The measures $\nu^{\omega}_{\Lambda,\beta}(\diff\sigma_x)$ are defined by the densities
\begin{equation}\label{nu}
\nu^{\omega}_{\Lambda,\beta}(\sigma_x):=\frac{\displaystyle e^{-F^{\omega}_{\Lambda,\beta}(\sigma_x)}}{\displaystyle\int_{\mathbb R}e^{-F^{\omega}_{\Lambda,\beta}(s)}\nu(\diff s)}
\end{equation}
where
\begin{equation}
F^{\omega}_{\Lambda,\beta}(\sigma_x)=F(\sigma_x)-\beta\sigma_x\sum_{y\in \Lambda^c}J_{xy}\omega_y\;.
\end{equation}
In the sequel we will find useful to consider the finite (non-normalized) measures
 $\pi^{\omega}_-$ and $\pi^{\omega}_+$ defined by 
\begin{equation}
\pi^{\omega}_-(\sigma_x):= \frac{\displaystyle e^{-F(\sigma_x)-\vartheta(\omega)\beta|\sigma_x|}}{\displaystyle\int_{\mathbb R}e^{-F(s)+\vartheta(\omega)\beta|s|}\diff s}
    \quad \text{and}  \quad 
\pi^{\omega}_+(\sigma_x):= \frac{\displaystyle e^{-F(\sigma_x)+\vartheta(\omega)\beta|\sigma_x|}}{\displaystyle\int_{\mathbb R}e^{-F(s)-\vartheta(\omega)\beta|s|}\diff s}.
\end{equation}
For strongly tempered $\omega$, they provide $\Lambda$-independent lower and upper bounds 
\be\label{pi}
\pi^{\omega}_-(\sigma_x)\le \nu_{\Lambda,\beta}^{\omega}(\sigma_x)\le \pi^{\omega}_+(\sigma_x)
\ee
for all $\sigma_x \in \mathbb{R}$.

For the bounded-spin case, we have further $\omega$-independent bounds replacing $\vartheta(\omega)$ by its upper bound $\vartheta$:
\be\label{pi-bounded}
\pi^\infty_-(\sigma_x)\le \nu_{\Lambda,\beta}^{\omega}(\sigma_x)\le \pi^\infty_+(\sigma_x)
\ee
with
\begin{equation}
\pi^\infty_-(\sigma_x):= \frac{\displaystyle e^{-F(\sigma_x)-\vartheta\beta|\sigma_x|}}{\displaystyle\int_{\mathbb R}e^{-F(s)+\vartheta\beta|s|}\diff s}
    \quad \text{and}  \quad 
\pi^\infty_+(\sigma_x):= \frac{\displaystyle e^{-F(\sigma_x)+\vartheta\beta|\sigma_x|}}{\displaystyle\int_{\mathbb R}e^{-F(s)-\vartheta\beta|s|}\diff s}.
\end{equation}

\subsection{Cluster expansion}

As the first factor in the right-hand side of \eqref{eq:Xuan268_1} is independent of $t$, the identity \eqref{ratio} becomes
\be 
\mu^{\omega}_{\tilde{\Lambda}_k,\beta}\left(\exp\left( \frac{it}{\sqrt{D_k}}\sum_{x\in \tilde{\Lambda}_k}\sigma_x \right)  \right) \;=\; \frac{\Xi^{\omega}_{\tilde{\Lambda}_k,\beta}(t)}{\Xi^{\omega}_{\tilde{\Lambda}_k,\beta}(0)}
\ee 
with
\be
\Xi^{\omega}_{\tilde{\Lambda}_k,\beta}(t)=1+\sum_{n=1}^\infty\sum_{(R_1,\ldots, R_n)\in\mathcal{R}^n}\prod_{1\le i<j\le n}\mathbf{1}_{\underline{R}_i\cap\underline{R}_j=\emptyset}\prod_{i=1}^n\zeta^{(1)}_{t,r_0}(R).
\ee
The cluster expansion of interest here corresponds to the (formal) logarithm of this last polymer expansions.  This takes the form (Theorem \ref{theo:cl1} in Appendix \ref{cluster_expansion_subsec}) 
\begin{equation}\label{eq:Xuan268_2}
\Xi^{\omega}_{\tilde{\Lambda}_k,\beta}(t) = \exp\left(\sum_{n=1}^\infty\frac{1}{n!}\sum_{(R_1,\ldots, R_n)\in\mathcal{R}^n}\omega^T_n(R_1,\ldots, R_n)\prod_{i=1}^n\zeta^{(1)}_{t,r_0}(R)\right)\;=:\; \exp \bigl[U(t)\bigr]\;.
\end{equation}

For $t=0$, the polymers only contain two-body interactions.  To distinguish then, we denote $\mathcal{R}_1$  the set of all elements of the form $\{x\}$ for all $x\in \mathbb{Z}^d$, and $\mathcal{R}_2$ the set of polymers $R\Subset \mathcal{P}_{1,2}$ where all the elements $X\in R$ with $|X|=2$.
With this notation, 
\begin{equation}\label{eq:Xuan268_3} 
\Xi_{\tilde{\Lambda}_k,\beta}^{\omega}(0) =\exp\left(\sum_{n=1}^\infty \frac{1}{n!}\sum_{(R_1,\ldots, R_n)\in \mathcal{R}_{2}^n}\omega^T_n(R_1,\ldots, R_n)\prod_{i=1}^n\zeta^{(1)}_{r_0}(R)\right),
\end{equation}
here, $\zeta^{(1)}_{r_0}(R)=\zeta^{(1)}_{0,r_0}(R)$. 

Combining (\ref{eq:Xuan268_2}) and (\ref{eq:Xuan268_3}),
\begin{align}\label{eq:difference}
\frac{\Xi^{\omega}_{\tilde{\Lambda}_k,\beta}(t)}{\Xi^{\omega}_{\tilde{\Lambda}_k,\beta}(0)}
    &=\exp\Bigg(\sum_{n=1}^{\infty}\frac{1}{n!}\sum_{(R_1,\ldots,R_n) \in \mathcal{R}^n}\omega_n^T(R_1,\ldots,R_n)\prod_{i=1}^n\zeta^{(1)}_{t,r_0}(R)\nonumber\\ 
    &\hspace{2cm}-\sum_{n=1}^{\infty}\frac{1}{n!}\sum_{(R_1,\ldots,R_n) \in \mathcal{R}^n_2}\omega_n^T(R_1,\ldots,R_n)\prod_{i=1}^n\zeta^{(1)}_{r_0}(R)\Bigg).    
\end{align}

The remaining task is to show that for $r_0$ sufficiently large (i) the cluster expansion \eqref{eq:Xuan268_2} converges, and (ii) the real part of the bracketed difference in \eqref{eq:difference} goes to zero sufficiently fast as $k\to\infty$.  

\subsection{``Finite-range" plus perturbation}

If the interaction is finite-range (f.r.), there exists an $r_0$ such that $J_{xy}=0$ if the distance between the sites $x$ and $y$ is strictly larger than $r_0$.  In this case, the sets $\tilde\Lambda_{k}$ become a family of ``isolated" sites, that is, of sites whose spins do not interact with each other.   In this case
\be\label{eq:finite}
\left[\mu^{\omega}_{\tilde{\Lambda}_k,\beta}\left(\exp\Bigl(\frac{it \sigma_x}{\sqrt{D_k}}\Bigr)\right)\right]^{\rm f.r.}\;=\;
\prod_{x\in \tilde{\Lambda}_k}\nu^{\omega}_{\tilde{\Lambda}_k,\beta}\left(\exp\Bigl(\frac{it \sigma_x}{\sqrt{D_k}}\Bigr)\right)\;.
\ee 
While the expression for infinite-range interactions are notoriously more involved, the basic technical steps can be understood at the level of the finite-range product \eqref{eq:finite}.  Heuristically our approach consists on performing a careful estimation of this factorization and subsequently showing that the correction incorporated by long-range contributions does not compete with this leading order.  The smallness of these corrections is determined via cluster-expansion technology. 

These observations can be put into a formal basis by classifying the terms in the expansion \eqref{eq:Xuan268_2} into those consisting purely of polymers in $\mathcal{R}_1$ and the rest.  Due to the connectivity condition, the only clusters of the former type are copies of a single ``monomer" $\{x\}$.  The split takes the form
\be\label{eq:split-u}
U(t)\;=\; U_1(t) + U_2(t)
\ee
with
\begin{align}\label{eq:split-u1}
    U_1(t) &=\sum_{n=1}^\infty\frac{1}{n!}\sum_{x\in \tilde{\Lambda}_k}\omega_n^T(\{x\},\ldots, \{x\})\, \bigl[\zeta^{(1)}_{t,r_0}(\{x\})\bigr]^n\,\\  
    U_2(t)&=\sum_{n=1}^{\infty}\frac{1}{n!}\sum_{\substack{(R_1,\ldots, R_n)\in \mathcal{R}^n\\ \exists 
 i_0\in [n]:\;|\underline{R}_{i_0}|\ge 2}}\omega^T_n(R_1,\ldots, R_n)\,\prod_{i=1}^n\zeta^{(1)}_{t,r_0}(R_i)\;.
\label{eq:split-u2}
\end{align}
From \eqref{eq:polmect12},
\be
\omega_n^T(\{x\},\ldots, \{x\})\;=\; (-1)^{n-1}\,(n-1)!\;.
\ee
Hence
\begin{eqnarray}
U_1(t) &=& \sum_{x\in \tilde{\Lambda}_k} \sum_{n=1}^{\infty}\frac{(-1)^{n-1}}{n} \bigl[\zeta^{(1)}_{t,r_0}(\{x\})\bigr]^n\\
&=& \sum_{x\in \tilde{\Lambda}_k}\,\ln\bigl[1+\zeta^{(1)}_{t,r_0}(\{x\})\bigr]
\end{eqnarray}
Applying the definition
\be
\zeta^{(1)}_{t,r_0}(\{x\})\;=\; \int \left(\exp\left(\frac{it\sigma_x}{\sqrt{D_k}}\right)-1\right)\nu^{\omega}_{\tilde{\Lambda}_k,\beta}(\diff\sigma_x)\;,
\ee
we conclude that [$\nu^{\omega}_{\tilde{\Lambda}_k,\beta}$ is normalized]
\be
\exp\bigl[U_1(t)\bigr]\;=\; \prod_{x\in \tilde{\Lambda}_k}\nu^{\omega}_{\tilde{\Lambda}_k,\beta}\left(\exp\Bigl(\frac{it \sigma_x}{\sqrt{D_k}}\Bigr)\right)
\ee
corresponds to the ``finite-range" part.

Each polymer $R\not\in\mathcal{R}_1$, on the other hand, is formed by a polymer in $\mathcal{R}_2$ plus, possibly, some singletons $\{x\}$ with $x\in \underline{R}$.  The former  is the only part of $R_i$ relevant for the connection condition imposed by $\omega^T$.  Thus, it is convenient to reorganize the sum over polymers into a sum over polymers in $\mathcal{R}_2$  and a sum over families of monomers on top of them.  In this way, 
\be\label{eq:cluster-u2}
U_2(t)\;=\;\sum_{n=1}^{\infty}\frac{1}{n!}\sum_{(R_1,\ldots, R_n)\in \mathcal{R}_2^n} \omega^T_n(R_1,\ldots, R_n)\,\prod_{i=1}^n\Gamma(R_i)(t)
\ee
with
\begin{eqnarray}\label{eq:summed-rho}
\Gamma(R_i)(t)&=&\sum_{S\subset \underline{R}_i}\int_{\mathbb R^{|\underline{R}_i|}}\prod_{\{x,y\}\in R_i}\left(e^{\beta J_{xy}\sigma_x\sigma_y}-1\right)\prod_{x\in S}\left(\exp\left(\frac{it\sigma_x}{\sqrt{D_k}}\right)-1\right)\prod_{x\in\underline{R}_i}\nu^{\omega}_{\tilde{\Lambda}_k,\beta}(\diff\sigma_x)\nonumber\\
&=& \int_{\mathbb R^{|\underline{R}_i|}}\prod_{\{x,y\}\in R_i}\left(e^{\beta J_{xy}\sigma_x\sigma_y}-1\right)\prod_{x\in  \underline{R}_i}\exp\left(\frac{it\sigma_x}{\sqrt{D_k}}\right)\prod_{x\in\underline{R}_i}\nu^{\omega}_{\tilde{\Lambda}_k,\beta}(\diff\sigma_x)
\end{eqnarray}

From the preceding considerations \eqref{eq:difference} takes the promised form ``finite-range-times-perturbation":
\be\label{eq:diff-fin} 
\mu^{\omega}_{\Lambda_k,\beta}(e^{i t \bar{S}_k})\;=\;
\frac{\Xi^{\omega}_{\tilde{\Lambda}_k,\beta}(t)}{\Xi^{\omega}_{\tilde{\Lambda}_k,\beta}(0)}\;=\; 
\left[\prod_{x\in \tilde{\Lambda}_k}\nu^{\omega}_{\tilde{\Lambda}_k,\beta}\left(\exp\Bigl(\frac{it \sigma_x}{\sqrt{D_k}}\Bigr)\right)\right]\, \times\,\exp \bigl[\Delta U(t)\bigr]
\ee
with
\be\label{eq:difference-final}
\Delta U(t)\;=\; \sum_{n=1}^{\infty}\frac{1}{n!}\sum_{(R_1,\ldots, R_n)\in \mathcal{R}_2^n} \omega^T_n(R_1,\ldots, R_n)\,\left[\prod_{i=1}^n\Gamma(R_i)(t) - \prod_{i=1}^n\Gamma(R_i)(0) \right]\;.
\ee

\section{Proof of Theorem \ref{main}. (B) Convergence of the cluster expansion \eqref{eq:cluster-u2}}\label{sec.proof-b}

In the proofs below, the series \eqref{eq:cluster-u2} will be subject to termwise operations, which are strictly rigorous only upon absolute convergence of the series.  This leads to considering the series of positive terms
\be\label{eq:cluster-u2-abs}
\card U_2\;=\;\sum_{n=1}^{\infty}\frac{1}{n!}\sum_{(R_1,\ldots, R_n)\in \mathcal{R}_2^n} \card{\omega^T_n(R_1,\ldots, R_n)}\,\prod_{i=1}^n\card\Gamma(R_i)
\ee
with
\begin{equation}\label{eq:summed-rho-abs}
\card\Gamma(R_i)\;=:\;
\int_{\mathbb R^{|\underline{R}_i|}}\prod_{\{x,y\}\in R_i}\left(e^{\beta \card{J_{xy}\sigma_x\sigma_y}}-1\right)\prod_{x\in\underline{R}_i}\nu^{\omega}_{\tilde{\Lambda}_k,\beta}(\diff\sigma_x)\;.
\end{equation}
Note the fortunate $t$-independence of the bound.

The convergence of \eqref{eq:cluster-u2-abs} implies that $U_2(t)$ is a well defined function that satisfies (i) $\left|U_2(t)\right|\le \card U_2(t)$, and (ii) it is an analytic function of the parameters inside the weights $\Gamma(R)(t)$.  The latter implies, in particular, that successive derivatives of $U_2(t)$ are defined by term-by-term differentiation.   
The following propositions, furthermore, establishes the perturbative character of the contribution $\exp[\Delta U(t)]$.

\begin{proposition}\label{prop:cluster}
For any $\varepsilon>0$:
\begin{itemize}
\item[(i)]  In the bounded-spin case, there exists a dilution $r_{\varepsilon}(\beta)$ such that 
\be
 \card U_2< \bigl|\tilde\Lambda_k\bigr|\,\varepsilon \quad \mbox{for } r_0 \ge r_\varepsilon\;.
\ee
\item[(ii)] In the general unbounded case, there exists a $\beta_\varepsilon(\omega)$ such that  
\be 
\card U_2< \bigl|\tilde\Lambda_k\bigr|\,\varepsilon \quad \mbox{for } \beta \le \beta_\varepsilon\;.
\ee
\end{itemize} 
\end{proposition}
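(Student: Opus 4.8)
The plan is to apply a standard convergence criterion for abstract polymer (cluster) expansions — of Kotecký--Preiss or Gruber--Kunz type — to the series \eqref{eq:cluster-u2-abs}, whose polymers range over $\mathcal{R}_2$ (those built only from two-body bonds $\{x,y\}$) and whose activities are the $t$-independent weights $\card\Gamma(R)$. The criterion requires a function $a:\mathbb{Z}^d\to(0,\infty)$ (or a weight $a(R)$ on polymers) such that, for every polymer $R$,
\be
\sum_{R'\colon \underline{R}'\cap\underline{R}\neq\emptyset}\card\Gamma(R')\,e^{a(\underline{R}')}\;\le\; a(\underline{R})\;,
\ee
after which the convergent cluster sum obeys $\card U_2\le \sum_{x\in\tilde\Lambda_k}(\text{local contribution at }x)$, giving the desired bound proportional to $|\tilde\Lambda_k|$. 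So the real content is a quantitative estimate on a single weight $\card\Gamma(R)$ and on the combinatorial sum of such weights over polymers anchored at a fixed site.

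The key steps, in order. First, I would bound a single bond factor: using $e^{\beta\card{J_{xy}\sigma_x\sigma_y}}-1\le \beta\card{J_{xy}}\card{\sigma_x}\card{\sigma_y}\,e^{\beta\card{J_{xy}\sigma_x\sigma_y}}$ and the uniform density bounds \eqref{pi} (resp.\ \eqref{pi-bounded} in the bounded case), the $\sigma$-integrals in \eqref{eq:summed-rho-abs} factor over sites and each produces a finite moment of $\pi^\omega_+$ (finite by the superstability remark, since $\pi^\omega_+$ is Gaussian-dominated); the $\beta$ appearing in front of each of the $|R|$ bonds is the small parameter. Schematically, for a polymer $R$ with bond set $E(R)$,
\be
\card\Gamma(R)\;\le\; C^{\card{\underline R}}\prod_{\{x,y\}\in E(R)}\beta\card{J_{xy}}
\ee
for a constant $C=C(\omega,\beta_0)$ uniform for $\beta\le\beta_0$. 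Second, I would sum over polymers: organizing a connected polymer as a connected graph on its support and using \eqref{abssum} to control $\sum_y\card{J_{xy}}$, the sum $\sum_{R\ni x}\card\Gamma(R)\,e^{a\card{\underline R}}$ over polymers containing a fixed site $x$ is dominated by a convergent series in the effective small parameter — in the bounded case this is $\widehat\beta_{r_0}=\beta\sup_x\sum_{y\in\mathbb{Z}^d_{r_0}}\card{J_{xy}}$, which $\to 0$ as $r_0\to\infty$ by absolute summability, and in the unbounded case it is (a constant times) $\beta$ itself, which we take small. Choosing the weight $a$ (e.g.\ $a\equiv$ const) and then $r_0$ large (resp.\ $\beta$ small) makes this per-site sum smaller than any prescribed $\varepsilon$, verifying the Kotecký--Preiss condition and simultaneously yielding $\card U_2\le|\tilde\Lambda_k|\,\varepsilon$.

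The main obstacle I anticipate is making the moment bounds on $\card\Gamma(R)$ genuinely uniform and tracking how the constant $C$ interacts with the small parameter. Because each site in $\underline R$ may be an endpoint of several bonds, the naive factorization produces a factor like $\int \card{\sigma_x}^{\deg_R(x)}\pi^\omega_+(\diff\sigma_x)$, whose growth in the degree must be beaten by the smallness coming from the $\beta\card{J_{xy}}$ factors attached to those same bonds; this is the familiar tension in unbounded-spin cluster expansions and is precisely why superstability (Gaussian domination of $\pi^\omega_+$, so that $\int\card\sigma^m\pi^\omega_+\le m!\,C_0^m$-type bounds hold) is invoked. In the unbounded case one must also keep the $\omega$-dependence explicit, since $\vartheta(\omega)$ enters $\pi^\omega_\pm$ and hence $C$; the statement only claims a threshold $\beta_\varepsilon(\omega)$ depending on $\omega$, so this is permissible, but the bookkeeping must confirm that for fixed strongly tempered $\omega$ the constant $C(\omega)$ is finite and the product structure still delivers a factor of $\beta$ per bond. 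Everything else — the abstract convergence theorem, the passage from the convergent cluster sum to the $|\tilde\Lambda_k|$-linear bound, and the reduction via \eqref{eq:bound-rfr} to the diluted lattice — is routine once these uniform weight estimates are in place.
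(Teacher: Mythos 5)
Your proposal follows the same top-level route as the paper: reduce to the Gruber--Kunz criterion (Corollary~\ref{corollary1.1}), verify it with a per-site estimate on $\sum_{R\ni x}\card\Gamma(R)e^{a|\underline R|}$, and obtain $\card U_2\le|\tilde\Lambda_k|(e^a-1)$. You also correctly identify the decisive tension (degree-dependent moments $\int\card{\sigma}^{\deg_R(x)}\pi_+^\omega$ versus the $\beta\card{J_{xy}}$ factors) and the role of superstability, and you correctly identify $\widehat\beta_{r_0}$ (resp.\ $\beta$) as the small parameter. So far so good.

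However, there is a genuine gap in the middle step. Your schematic bound $\card\Gamma(R)\le C^{|\underline R|}\prod_{\{x,y\}\in E(R)}\beta\card{J_{xy}}$ followed by ``organizing a connected polymer as a connected graph on its support and using \eqref{abssum}'' does not address how to sum over \emph{all connected graphs} $R$ on a vertex set of size $n$ — there are of order $2^{\binom{n}{2}}$ of them, and \eqref{abssum} alone controls only a sum over neighbors, not this combinatorial explosion. The paper handles this with a partition scheme (Appendix~\ref{tree_graphs}): the sum over connected spanning graphs of a fixed vertex set is rewritten as a sum over rooted spanning trees $\tau$, with the Ursell-type factor $(1-e^{-\beta\card{J\sigma\sigma}})$ on tree edges and an uncompensated $e^{\beta\card{J\sigma\sigma}}$ on the remaining edges of $R(\tau)$; the latter is then absorbed, via Cauchy--Schwarz and $\beta\card{J_{xy}}\le\widehat\beta_{r_0}$, into the site-diagonal Gaussian $e^{\widehat\beta_{r_0}\sum\sigma_x^2}$. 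From there an inductive claim bounds the tree sum by $\widehat\beta_{r_0}^{|E(\tau)|}$, and Cayley's formula converts the remaining sum over trees into a sum over degree sequences, producing the moment-generating quantity $G_{\widehat\beta_{r_0}}$ controlled by superstability. Without some tree-graph identity of this kind your per-site sum does not close; this is not a detail you can defer, it is where most of the work in Lemma~\ref{lem:cluster2} happens. If you supply that step (and replace your ``constant $C$'' by the degree-dependent factor $\Theta_{\widehat\beta_{r_0}}(d_i^\tau)$, summing it via Cayley as the paper does), the rest of your outline is correct.
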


\begin{proof}
We resort to the following well known criterion, reviewed in Appendix \ref{sec:cluster-exp}.   Let $a>0$ such that
\be\label{eq:gk1}
\sup_{x\in \mathbb{Z}^d_{r_0}}\sum_{\substack{R\in \mathcal{R}_2\\ R\ni x }}\card\Gamma(R)\,e^{a|\underline R|} \;\le\; e^a-1.
\ee
then (Corollary \ref{corollary1.1})  the components
\be\label{eq:cluster-u2-R-abs}
\card U_2(R)\;=\;\sum_{n=0}^{\infty}\frac{1}{n!}\sum_{(R_1,\ldots, R_n)\in \mathcal{R}_2^n} \card{\omega^T_n(R,R_1,\ldots, R_n)}\,\prod_{i=1}^n\card\Gamma(R_i)
\ee
satisfy
\be
\card U_2(R)\; \le\; e^{a\card{\underline R}}\;.
\ee
As a result
\be
\card U_2\;\le\; \bigl|\tilde\Lambda_k\bigr| \sup_{x\in \mathbb{Z}^d_{r_0}}\sum_{\substack{R\in \mathcal{R}_2\\ R\ni x }}\card\Gamma(R)\, \card U_2(R)\;\le\; \bigl|\tilde\Lambda_k\bigr| \sup_{x\in \mathbb{Z}^d_{r_0}}\sum_{\substack{R\in \mathcal{R}_2\\ R\ni x }}\card\Gamma(R)\, e^{a\card{\underline R}}
\ee
and hence,
\be
\card{U_2(t)}\;\le\; \bigl|\tilde\Lambda_k\bigr|( e^a-1)\;.
\ee
by \eqref{eq:gk1}.

The following lemma concludes the proof.
\end{proof}

\begin{lemma}\label{lem:cluster2}
For every $a>0$:
\begin{itemize}
\item[(i)]  In the bounded-spin case, there exists a dilution $r_{a}(\beta)$ such that the condition \eqref{eq:gk1} is satisfied for all dilutions $r_0 \ge r_a$.
\item[(ii)] In the general unbounded case, there exists a $\beta_a(\omega)$ such that the condition \eqref{eq:gk1} is satisfied for all dilutions $\beta \le \beta_a$.
\end{itemize} 
\end{lemma}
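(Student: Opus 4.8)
The plan is to bound the single-site sum in \eqref{eq:gk1} directly, using the explicit form of $\card\Gamma(R)$ from \eqref{eq:summed-rho-abs} together with the $\Lambda$-independent Gaussian bounds \eqref{pi}--\eqref{pi-bounded} on the normalized measures $\nu^{\omega}_{\tilde\Lambda_k,\beta}$. First I would reduce a general polymer $R\in\mathcal{R}_2$ to its underlying connected graph on two-point sets: writing $R=\{X_1,\dots,X_p\}$ with each $|X_l|=2$, the integral factorizes across the connected structure only loosely, so instead I would simply bound $\prod_{\{x,y\}\in R}\bigl(e^{\beta|J_{xy}\sigma_x\sigma_y|}-1\bigr)$ and integrate against $\prod_{x\in\underline R}\pi^{\omega}_+(\diff\sigma_x)$ (resp.\ $\pi^\infty_+$ in the bounded case). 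The key elementary estimate is $e^{\beta|J_{xy}\sigma_x\sigma_y|}-1\le \beta|J_{xy}|\,|\sigma_x\sigma_y|\,e^{\beta|J_{xy}\sigma_x\sigma_y|}$, after which one uses $|\sigma_x\sigma_y|\le\tfrac12(\sigma_x^2+\sigma_y^2)$ and the superstability/Gaussian control of $\pi^{\omega}_+$ to absorb the exponential: for $\beta$ small enough (unbounded case) or after dilution so that the effective coupling $\widehat\beta_{r_0}$ is small (bounded case), $\int_{\mathbb R}\sigma^2\,e^{C\beta|J_{xy}|\sigma^2}\,\pi^{\omega}_+(\diff\sigma)$ is finite and bounded by a constant $M$ independent of the site, of $\Lambda_k$, and — crucially — uniformly in $\beta\le\beta_0$ (resp.\ $r_0\ge 1$). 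This yields a bound of the shape $\card\Gamma(R)\le \prod_{\{x,y\}\in R}\bigl(C\,\beta|J_{xy}|\bigr)$ (resp.\ with $\widehat\beta_{r_0}$), up to combinatorial factors counting the spanning structure.

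Next I would perform the sum over polymers $R\ni x$ with $e^{a|\underline R|}$ inserted. Absorbing $e^{a|\underline R|}$ amounts to replacing $\beta|J_{xy}|$ by $e^{2a}\beta|J_{xy}|$ (each edge touches at most two sites), so it suffices to control $\sup_x\sum_{R\in\mathcal{R}_2,\,R\ni x}\prod_{\{x,y\}\in R}\bigl(C e^{2a}\beta|J_{xy}|\bigr)$. A polymer in $\mathcal{R}_2$ is a connected multiset of edges; bounding the number of connected subgraphs on $m$ edges containing a fixed vertex by the usual tree-counting estimate, this sum is dominated by a convergent series in the quantity $\epsilon_0:=C e^{2a}\beta\sup_x\sum_{y}|J_{xy}|$ (which is exactly $C e^{2a}\widehat\beta_{r_0}$ in the diluted/bounded setting, and $Ce^{2a}\beta\|J\|$ in the unbounded setting), provided $\epsilon_0$ is below a fixed threshold. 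Since $\sum_y|J_{xy}|<\infty$ by \eqref{abssum}, one has $\widehat\beta_{r_0}\to 0$ as $r_0\to\infty$ (this is precisely the dilution mechanism highlighted in the text), so choosing $r_0\ge r_a$ large enough makes $\epsilon_0$ as small as desired; in the unbounded case, choosing $\beta\le\beta_a$ small does the same. Finally, having made the left-hand side of \eqref{eq:gk1} smaller than any prescribed positive number, in particular smaller than the fixed positive quantity $e^a-1$, establishes \eqref{eq:gk1}.

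The main obstacle I expect is \emph{not} the combinatorics — which is standard tree-graph counting — but rather obtaining the uniformity of the integral bound $M$ for $\int\sigma^2 e^{C\beta|J_{xy}|\sigma^2}\,\pi^{\omega}_+(\diff\sigma)$ simultaneously in all the relevant parameters. In the unbounded case $\pi^{\omega}_+$ itself depends on $\beta$ and on $\vartheta(\omega)$ through the weight $e^{-F(\sigma)+\vartheta(\omega)\beta|\sigma|}$ normalized by $\int e^{-F(s)-\vartheta(\omega)\beta|s|}\diff s$; one must check that, for $\beta$ below a threshold depending on $\omega$, the superstability bound $F(\sigma)\ge A\sigma^2-c$ still dominates both the linear term $\vartheta(\omega)\beta|\sigma|$ and the quadratic term $C\beta|J_{xy}|\sigma^2$ coming from expanding the $e^{\beta J_{xy}\sigma_x\sigma_y}-1$ factors, leaving a genuine Gaussian tail — this is where the smallness of $\beta$ is really consumed, and the threshold $\beta_a(\omega)$ must be taken accordingly (e.g.\ $C\beta\sup_x\sum_y|J_{xy}|<A$). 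In the bounded case this is automatic since spins live in $[-R,R]$, and the only role of dilution is to shrink the coupling sum; there the subtlety is merely bookkeeping the passage from $\beta$ to $\widehat\beta_{r_0}$ consistently with the polymer/activity definitions on $\mathbb{Z}^d_{r_0}$.
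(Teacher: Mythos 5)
Your overall strategy matches the paper's in spirit --- bound each interaction factor by a small quantity, absorb the exponentials into a per-site Gaussian via Cauchy--Schwarz together with $\widehat\beta_{r_0}$-smallness or $\beta$-smallness, and then control the sum over polymers by tree-like combinatorics driven by $\widehat\beta_{r_0}\to 0$ --- but two steps are glossed over in a way that, as written, leaves genuine gaps.

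First, the claimed bound $\card\Gamma(R)\le\prod_{\{x,y\}\in R}\bigl(C\beta|J_{xy}|\bigr)$ with a uniform constant $C$ does not hold. After using $e^{\beta|J_{xy}\sigma_x\sigma_y|}-1\le\beta|J_{xy}||\sigma_x\sigma_y|e^{\beta|J_{xy}\sigma_x\sigma_y|}$ on every edge and Cauchy--Schwarz to collect the exponentials, each vertex $x$ inherits a factor $|\sigma_x|^{d_x}$, $d_x$ the degree of $x$ in $R$. The resulting per-vertex integral $\int|\sigma|^{d_x}e^{2\widehat\beta_{r_0}\sigma^2}\,\pi^\omega_+(\diff\sigma)$ grows without bound in $d_x$; it cannot be absorbed into a site-independent constant $M$. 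The paper treats this carefully: it introduces the degree-dependent $\Theta_{\widehat\beta_{r_0}}(d)$ and then invokes Cayley's formula, whose $1/(d_i-1)!$ weight damps the growth of the moments and yields the convergent generating function $G_{\widehat\beta_{r_0}}$.

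Second, the sum over polymers $R\in\mathcal{R}_2$ containing a fixed $x_0$ cannot be controlled by ``the usual tree-counting estimate'' once every edge carries both a small factor and an exponential. For a fixed vertex set $C$ the number of connected spanning graphs is super-exponential in $|C|$, and the extra edges must be reorganized, not merely counted. The paper does this with a partition scheme (tree-graph identity, Appendix A.3) applied \emph{before} any integral bound: the sum over connected spanning graphs on $C$ is rewritten as a sum over trees $\tau$, where only the $|C|-1$ tree edges retain the small factor $(1-e^{-\beta|J_{xy}\sigma_x\sigma_y|})\le\beta|J_{xy}||\sigma_x\sigma_y|$ and the remaining (possibly many) non-tree edges contribute only the exponential $e^{\beta|J_{xy}\sigma_x\sigma_y|}$ that Cauchy--Schwarz absorbs. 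Your approach applies a small-factor bound to \emph{all} edges and then attempts to resum over graphs, which would require a nonnegative-weight tree-graph inequality (Penrose or Procacci--Yuhjtman) that you gesture at but never actually invoke; without it, the sum over connected graphs on a given $C$ is not dominated by the sum over trees. Relatedly, the paper's geometric bound $\sum_{x_1,\ldots,x_n}\prod_{\tau}\beta|J_{x_ix_j}|\le\widehat\beta_{r_0}^n$ is established by a specific induction on tree depth; that step is not covered by ``$\widehat\beta_{r_0}$ is small.'' Fixing both points essentially reconstructs the paper's argument, so the underlying mechanism is correct, but the proposal as stated skips the two ingredients on which convergence actually hinges.
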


\begin{proof}
Each polymer $R\in\mathcal R_2$ is uniquely defined by a graph $G(R)$ with vertex set $\underline R$ and edges $\{x,y\}\in R$.  
Thus,
\begin{eqnarray}\label{eq:ta-1}
 T_{x_0,r_0}(a)&:=&\sum_{\substack{R\in \mathcal{R}_2\\ R\ni x_0 }}\card\Gamma(R)\,e^{a|\underline R|}\nonumber\\
&=& \sum_{\substack{C\subset \mathbb{Z}^d_{r_0} \\ C\ni x_0 \\ |C|\ge 2}} e^{a|C|} \sum_{G\in \mathcal G(C)}
\int_{\mathbb R^{|C|}}\prod_{\{x,y\}\in E(G)}\left(e^{\beta \card{J_{xy}\sigma_x\sigma_y}}-1\right)\prod_{x\in C}\nu^{\omega}_{\tilde{\Lambda}_k,\beta}(\diff\sigma_x)
\end{eqnarray}
where $\mathcal G(C)$ denotes all spanning graphs with vertex set $C$.

As reviewed in Appendix \ref{tree_graphs}, an efficient way to account for sums over graphs with vertices $C$ containing a fixed vertex $x_0$ is to use a partition scheme.  This is an injective map that to each graph associates a unique tree $\tau$ in the family $\mathcal T_{x_0}[C]$ of $x_0$-rooted $C$-spanning trees, yielding a partition of the set of graphs into disjoint families of graphs labelled by a tree $\tau$ and containing graphs obtained by adding up to a maximal set $E(\tau)$ of further links.  Then \eqref{eq:ta-1} becomes

\begin{eqnarray}
 \lefteqn{T_{x_0,r_0}(a)}\nonumber\\
 &=&\sum_{\substack{C\subset \mathbb{Z}^d_{r_0} \\ C\ni x_0 \\ |C|\ge 2}} e^{a\card C}
\sum_{\tau\in\mathcal{T}_{x_0}[C]}\int_{\mathbb R^{|C|}}\prod_{\{x,y\}\in E(\tau)}\left(e^{\beta| J_{xy}\sigma_x\sigma_y|}-1\right)\sum_{E(\tau)\subset E \subset E(R(\tau))}
\prod_{\{x,y\}\in E}\left(e^{\beta|J_{xy}\sigma_x\sigma_y|}-1\right)\
\prod_{x\in C}\,\nu^{\omega}_{\tilde{\Lambda}_k,\beta}(\diff\sigma_x)\nonumber\\ 
&=&\sum_{\substack{C\subset \mathbb{Z}^d_{r_0} \\ C\ni x_0 \\ |C|\ge 2}} e^{a\card C}\sum_{\tau\in\mathcal{T}_{x_0}[C]}\int_{\mathbb R^{|C|}}\prod_{\{x,y\}\in E(\tau)}\left(e^{\beta| J_{xy}\sigma_x\sigma_y|}-1\right)\prod_{\{x,y\}\in E(R(\tau))\setminus E(\tau)}e^{\beta|J_{xy}\sigma_x\sigma_y|}
\prod_{x\in C}\nu^{\omega}_{\tilde{\Lambda}_k,\beta}(\diff\sigma_x)\nonumber\\
&=& \sum_{\substack{C\subset \mathbb{Z}^d_{r_0} \\ C\ni x_0 \\ |C|\ge 2}} e^{a\card C}\sum_{\tau\in\mathcal{T}_{x_0}[C]}\int_{\mathbb R^{|C|}}\prod_{\{x,y\}\in E(\tau)}\left(1-e^{-\beta| J_{xy}\sigma_x\sigma_y|}\right)\prod_{\{x,y\}\in E(R(\tau))}e^{\beta|J_{xy}\sigma_x\sigma_y|}
\prod_{x\in C}\nu^{\omega}_{\tilde{\Lambda}_k,\beta}(\diff\sigma_x)
\;.
\end{eqnarray}

To bound this expression we resort to established combinatorial techniques already employed in \cite{EM,PS}, which, in the setting of continuous spins, can also be traced back to \cite{FPS}.  
By Cauchy-Schwarz inequality, and since $\beta |J_{xy}|\le \widehat{\beta}_{r_0}$,
\begin{equation}\label{eq:Xuan218_6}
\beta\sum_{\{x,y\}\in E(R(\tau))}|J_{xy}\sigma_x\sigma_y|\le \sum_{\{x,y\}\subset C}\beta|J_{xy}||\sigma_x\sigma_y|\le \widehat{\beta}_{r_0}\sum_{x\in C}\sigma_x^2.
\end{equation}
Using, furthermore, the elementary bound $1-e^{-x}\le x$, we obtain
\begin{eqnarray}
 T_{x_0,r_0}(a) &\le& 
\sum_{\substack{C\subset \mathbb{Z}^d_{r_0} \\ C\ni x_0 \\ |C|\ge 2}} e^{a\card C} \sum_{\tau\in\mathcal{T}_{x_0}[C]}\int_{\mathbb R^{|C|}}\prod_{\{x,y\}\in E(\tau)}
\beta \bigl| J_{xy}\sigma_x\sigma_y\bigr|
 \prod_{x\in C} e^{2\widehat{\beta}_{r_0}\sigma_x^2}\,\nu^{\omega}_{\tilde{\Lambda}_k,\beta}(\diff\sigma_x)\nonumber\\
 &=& \sum_{n=2}^{\infty}e^{an}\sum_{\substack{C\subset \mathbb{Z}^d_{r_0}\\ C\ni x_0 \\  |C|=n}}\sum_{\tau\in \mathcal{T}_{x_0}[C]}\int_{\mathbb R^{n}}\prod_{\{x,y\}\in E(\tau)}\beta \bigl| J_{xy}\sigma_x\sigma_y \bigr|\prod_{x\in C}e^{2\widehat{\beta}_{r_0}\sigma_x^2}\,\nu^{\omega}_{\tilde{\Lambda}_k,\beta}(\diff\sigma_x)
 \;.
\end{eqnarray}

To continue, let us make explicit reference to the non-$x_0$ vertices in $C$ (note the extra $n!$ due to the sum over \emph{ordered} $n$-tuples).
\begin{eqnarray}
 T_{x_0,r_0}(a) &\le& \sum_{n=1}^{\infty}\frac{e^{a(n+1)}}{n!}\sum_{\substack{x_1,\ldots, x_n\in \mathbb{Z}^d_{r_0}\\ x_i\ne x_j}}\sum_{\tau\in\mathcal{T}_{x_0}[\{x_1,\ldots, x_n\}]}\int_{\mathbb R^{n+1}}\prod_{\{x_i, x_j\}\in E(\tau)}\beta \bigl|J_{x_ix_j}\sigma_{x_i}\sigma_{x_j} \bigr|\prod_{i=0}^n e^{2\widehat{\beta}_{r_0}\sigma_{x_i}^2}
 \,\nu^{\omega}_{\tilde{\Lambda}_k,\beta}(\diff\sigma_{x_i})\nonumber\\
 &=&\sum_{n=1}^{\infty}\frac{1}{n!}\sum_{\substack{x_1,\ldots, x_n\in\mathbb{Z}^d_{r_0}\\ x_i\ne x_j}}\sum_{\tau\in\mathcal{T}_{x_0}[\{x_1,\ldots, x_n\}]}\prod_{\{x_i, x_j\}\in E(\tau)}\beta\bigl|J_{x_ix_j}\bigr|\prod_{i=0}^n\left(e^a\int_{\mathbb R}|\sigma_{x_i}|^{d^\tau_i}e^{2\widehat{\beta}_{r_0}\sigma_{x_i}^2}
 \,\nu^{\omega}_{\tilde{\Lambda}_k,\beta}(\diff\sigma_{x_i})\right)\nonumber\\
 &=:& \sum_{n=1}^{\infty}\frac{1}{n!}\sum_{\substack{x_1,\ldots, x_n\in\mathbb{Z}^d_{r_0}\\ x_i\ne x_j}}\sum_{\tau\in\mathcal{T}_{x_0}[\{x_1,\ldots, x_n\}]}\prod_{\{x_i, x_j\}\in E(\tau)}\beta\bigl|J_{x_ix_j}\bigr|\prod_{i=0}^n \left(e^a\Theta_{\widehat{\beta}_{r_0}}(d^\tau_i)\right)
 \;.
\end{eqnarray}
In this expressions, $d_i^\tau$ is the degree of the vertex $i$ in the tree $\tau$.  At this point we permute the sum over vertices with the sum over trees to obtain
\be\label{eq:permut}
T_{x_0,r_0}(a) \;\le\; \sum_{n=1}^{\infty}\frac{1}{n!} \sum_{\tau\in\mathcal{T}_{x_0}(n)}
\sum_{\substack{x_1,\ldots, x_n\in V_*(\tau)\\ x_i\ne x_j}}\prod_{\{x_i, x_j\}\in E(\tau)}\beta\bigl|J_{x_ix_j}\bigr|\prod_{i=0}^n \left(e^a\Theta_{\widehat{\beta}_{r_0}}(d^\tau_i)\right)
\ee
where $\mathcal{T}_{x_0}(n)$ denotes the family of $x_0$-rooted trees with $n$ additional vertices, and $V_*(\tau)$ the set of non-root vertices of $\tau$.  The sum over vertices can be bounded as follows.
\smallskip

\noindent
{\bf Claim:} For any $\tau\in \mathcal{T}_{x_0}(n)$,
\be 
f(\tau):=\sum_{\substack{x_1,\ldots, x_n\in V_*(\tau)\\ x_i\ne x_j}} \prod_{\{x_i,x_j\} \in E(\tau)} \beta|J_{x_ix_j}| 
\;\le\; \widehat\beta_{r_0}^{|E(\tau)|}\;=\;\widehat{\beta}_{r_0}^n.
\ee
This is proven by induction on $k$ = maximum depth of the tree. For $k=1$, we have
\be
\sum_{\substack{x_1,\ldots, x_n\in V_*(\tau)\\ x_i\ne x_j}}  \prod_{i=1}^n \beta|J_{x_0x_i}| \le \prod_{i=1}^n \biggl( \sum_{x_i \in \mathbb{Z}^d_{r_0}} \beta |J_{x_0x_i}| \biggr) \;\le\;  \widehat{\beta}^n_{r_0}.
\ee
Assume that the inequality holds for all finite trees with maximum depth $k-1$ and consider a tree $\tau$ with depth $k$.  The tree is defined by the vertices $x_{i_1},\ldots,x_{i_\ell}$ that are connected to the root $x_0$, and the forest formed by  trees $\tau_{i_j}$ with root $x_{i_j}$, $1\le j\le \ell$, which is created upon removal of the root $x_0$. Note that,
\be
\card{E(\tau)}\;=\; \ell + \sum_{j=1}^\ell \card{E(\tau_{i_j})}\;.
\ee
Then
\begin{align*}
\sum_{\substack{x_1,\ldots, x_n\in\mathbb{Z}^d_{r_0}\\ x_i\ne x_j, 0\le i,j\le n}} &\prod_{\{x_i,x_j\} \in E(\tau)} \beta|J_{x_ix_j}| 
=  \sum_{x_{i_1},\ldots,x_{i_\ell}} \prod_{j=1}^{\ell} \beta |J_{x_0x_{i_j}}| \prod_{j=1}^{\ell}f(\tau_{i_j})\\
&\le \sum_{x_{i_1},\ldots,x_{i_\ell}} \prod_{j=1}^{\ell} \beta |J_{x_0x_{i_j}}| \prod_{j=1}^{\ell} \widehat{\beta}_{r_0}^{|E(\tau_{i_j})|}
\;\le\; \widehat{\beta}_{r_0}^{\ell}\,\prod_{j=1}^{\ell} \widehat{\beta}_{r_0}^{|E(\tau_{i_j})|}
\;=\; \widehat{\beta}^n_{r_0}\;.
\end{align*}
The claim is proven.
\smallskip

This claim and \eqref{eq:permut} lead to
\be 
 T_{x_0,r_0}(a) \;\le\; \sum_{n=1}^\infty\frac{\widehat{\beta}_{r_0}^{n}}{n!}\sum_{\tau\in\mathcal{T}_{x_0}(n)}\prod_{i=0}^n\left(e^a\Theta_{\widehat{\beta}_{r_0}}(d^\tau_i)\right)\;.
\ee
As the last summand is only a function of the degrees of the tree, we can pass to a sum over degrees weighted by the number (Cayley formula)
\[
\sum_{\tau\in\mathcal{T}_{d_0,\ldots,d_n}}1=\frac{(n-1)!}{\prod_{i=0}^n(d_i-1)!},
\]
where $\mathcal{T}_{d_0,\ldots,d_n}$ is the set of labeled trees with $n+1$ vertices and each vertex $i$ has degree $d_i$.
Then, 
\begin{eqnarray}\label{eq:tfin}
    T_{x_0,r_0}(a) 
    &\le& \sum_{n=1}^{\infty}\widehat{\beta}_{r_0}^n\,e^{a(n+1)}\sum_{\substack{d_0+\ldots+d_{n}=2n\\ d_i\ge 1}}\prod_{i=0}^n\frac{\Theta_{\widehat{\beta}_{r_0}}(d_i)}{(d_i-1)!}\nonumber\\ 
    &\le&\sum_{n=1}^{\infty} \widehat{\beta}_{r_0}^n
    \left(e^a\sum_{s\ge 0}\frac{\Theta_{\widehat{\beta}_{r_0}}(s+1)}{s!}\right)^{n+1}\nonumber\\
    &=:& \sum_{n=1}^{\infty} \widehat{\beta}_{r_0}^n \bigl(e^a\,G_{\widehat{\beta}_{r_0}}\bigr)^{n+1}
\end{eqnarray}    
Let us observe that
\begin{eqnarray*}
G_{\widehat{\beta}_{r_0}}\;=\; \sum_{s\ge 0}\frac{1}{s!}
\int_{\mathbb R}|\sigma_{x}|^{s}e^{2\widehat{\beta}_{r_0}\sigma_{x}^2}
 \,\nu^{\omega}_{\tilde{\Lambda}_k,\beta}(\diff\sigma_{x})\;\le\; 
 \int_{\mathbb R}\exp\bigl(|\sigma_{x}|+2\widehat{\beta}_{r_0}\sigma_{x}^2\bigr)\,\pi^{\omega}_+(\diff\sigma_{x})\;.
\end{eqnarray*}
This integral exists as long as $2\widehat{\beta}_{r_0}$ is strictly less than the superstability constant $A$ in Definition \ref{superstability}.
In fact, imposing the dilution condition
\be\label{eq:dil-1}
\widehat{\beta}_{r_0}\;\le\; \frac{A}{4}
\ee
(recall that $r_0=1$ in the unbounded-spin case) we obtain the dilution-independent bound
\be
G_{\widehat{\beta}_{r_0}}\;\le\; \int_{\mathbb R}\exp\Bigl(|\sigma_{x}|+\frac{A}{4}\sigma_{x}^2\Bigr)\,\pi^{\omega}_+(\diff\sigma_{x})
\;=:\; G^\omega\;.
\ee
For the bounded spin case, the bound becomes $\omega$-independent by replacing $\omega$ by $\infty$ in this expression.
Inserting this bound in inequality \eqref{eq:tfin} and asking the additional conditions
\be\label{eq:dil-2}
\widehat{\beta}_{r_0}e^aG^\infty\le\frac12\quad\mbox{and}\quad 
\widehat{\beta}_{1}e^aG^\omega\le\frac12
\ee
respectively for bounded and unbounded spins,
leads to the final bound
\be
 T_{x_0,r_0}(a) \;\le\; 2(e^a G^{\omega})^2\widehat{\beta}_{r_0}\quad\mbox{and}\quad T_{x_0,r_0}(a) \;\le\; 2(e^a G^{\infty})^2\;.
\ee
As this bound is independent of $x_0$, condition  \eqref{eq:gk1} is satisfied if 
\be\label{eq:dil-3}
\widehat{\beta}_{r_0}\;\le\; \frac{e^a-1}{2(e^a G^{\omega})^2}\quad\mbox{and}\quad \widehat{\beta}_{r_0}\;\le\; \frac{e^a-1}{2(e^a G^{\infty})^2} \;. 
\ee
The lemma is proven, with $r_a$ and $\beta_a$ determined by the three conditions \eqref{eq:dil-1}, \eqref{eq:dil-2} and \eqref{eq:dil-3}.
\end{proof}

\section{Proof of Theorem \ref{main}. (C) Bound for the soft-frequency integral}

The bound is a result of the following proposition.

\begin{proposition}\label{prop.xuan1} 
For given $\beta$ and strongly tempered boundary condition $\omega$, there exist $\delta>0$ and $C^{(1)}(\beta,\omega)>0$ such that, for every $|t| \leq \delta \sqrt{D_{k}}$,
\be\label{eq:cbound-1}
\left|\mu^{\omega}_{\Lambda_k,\beta}(e^{i t \bar{S}_k})\right| \leq \exp \left(-t^{2}C^{(1)}\frac{|\tilde{\Lambda}_{k}|}{D_{k}}\right)
\ee  
for every $k>1$.  For the bounded-spin case the constant $C^{(1)}$ can be chosen uniformly in $\omega$.
\end{proposition}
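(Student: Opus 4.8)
The plan is to use the factorization \eqref{eq:diff-fin}, which gives
$|\mu^{\omega}_{\Lambda_k,\beta}(e^{it\bar S_k})| = \prod_{x\in\tilde\Lambda_k}\bigl|\nu^{\omega}_{\tilde\Lambda_k,\beta}(e^{it\sigma_x/\sqrt{D_k}})\bigr|\cdot e^{\mathrm{Re}\,\Delta U(t)}$,
and to control the two factors separately. For the product of single-site characteristic functions, I would expand each factor to second order: writing $\theta = t/\sqrt{D_k}$ and using that $\nu^{\omega}_{\tilde\Lambda_k,\beta}$ has all moments (by superstability, since it is dominated by $\pi^{\omega}_+$), one has $\nu^{\omega}_{\tilde\Lambda_k,\beta}(e^{i\theta\sigma_x}) = 1 + i\theta\, m_x - \tfrac{\theta^2}{2}\,v_x + O(\theta^3)$ where $m_x,v_x$ are the first and second moments. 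Hence $|\nu^{\omega}_{\tilde\Lambda_k,\beta}(e^{i\theta\sigma_x})|^2 = 1 - \theta^2\,\mathrm{Var}_x + O(\theta^3)$, where $\mathrm{Var}_x = v_x - m_x^2 > 0$ is the single-site variance; the remainder is uniformly controlled because third absolute moments are bounded uniformly in $k$ and in $\omega$ (in the bounded case), again via the Gaussian domination $\pi^{\omega}_+$. The crucial point is a uniform lower bound $\mathrm{Var}_x \ge v_{\min} > 0$: this follows because $\mathrm{Var}_x$ is the variance of a probability measure whose density lies between $\pi^{\omega}_-$ and $\pi^{\omega}_+$, which cannot be a point mass (the a-priori measure $\nu$, being Lebesgue weighted by $e^{-F}$, has a nontrivial absolutely continuous part; in the bounded case one also needs $E$ to not be a single point, which is implicit). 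Thus, for $|\theta|\le\delta'$ with $\delta'$ small enough, $|\nu^{\omega}_{\tilde\Lambda_k,\beta}(e^{i\theta\sigma_x})|^2 \le 1 - \tfrac12 v_{\min}\theta^2 \le \exp(-\tfrac12 v_{\min}\theta^2)$, so the product is bounded by $\exp(-\tfrac14 v_{\min}|\tilde\Lambda_k|\,t^2/D_k)$ once $|t|\le\delta'\sqrt{D_k}$.

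For the perturbative factor $e^{\mathrm{Re}\,\Delta U(t)}$, I would use that $\Delta U(t) = \Delta U(t) - \Delta U(0)$ involves only the difference $\prod_i\Gamma(R_i)(t) - \prod_i\Gamma(R_i)(0)$, which vanishes at $t=0$ together with its first derivative (the first derivative in $t$ at $t=0$ produces a factor $\nu^{\omega}(\sigma_x)$ integrated against an odd integrand — no, more carefully: the derivative brings down $\sum_{x\in\underline R_i}\sigma_x$ but the dominant contribution is not automatically odd, so one should argue via $|\Gamma(R)(t) - \Gamma(R)(0)|$ directly). The cleaner route: by the convergence of the cluster expansion established in Proposition \ref{prop:cluster}/Lemma \ref{lem:cluster2}, $\Delta U(t)$ is analytic in $t$ near $0$ with term-by-term differentiable series, so $|\mathrm{Re}\,\Delta U(t) - \mathrm{Re}\,\Delta U(0)| = |\mathrm{Re}\,\Delta U(t)| \le C\,t^2\,|\tilde\Lambda_k|/D_k$ for $|t|\le\delta''\sqrt{D_k}$, where the factor $|\tilde\Lambda_k|$ comes from the per-site summability bound in \eqref{eq:gk1} and the factor $t^2/D_k$ comes from Taylor-expanding $\Gamma(R)(t)$ to second order in $\theta = t/\sqrt{D_k}$: the zeroth and first order terms cancel against $\Gamma(R)(0)$ and its derivative — here I would check that the order-$\theta$ term, when summed over the cluster expansion, either vanishes by a symmetry/cancellation or is itself $O(t/\sqrt{D_k})$ and gets absorbed — and the second-order term is bounded uniformly using $e^{\beta|J_{xy}\sigma_x\sigma_y|}$ domination exactly as in the proof of Lemma \ref{lem:cluster2}.

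Combining, for $|t|\le\delta\sqrt{D_k}$ with $\delta = \min(\delta',\delta'')$ one gets
$|\mu^{\omega}_{\Lambda_k,\beta}(e^{it\bar S_k})| \le \exp\bigl(-(\tfrac14 v_{\min} - C)\,t^2|\tilde\Lambda_k|/D_k\bigr)$,
and shrinking $\delta$ further if necessary so that $C$ (which comes with a power of $\widehat\beta_{r_0}$, hence is small for large dilution or small $\beta$) satisfies $C < \tfrac18 v_{\min}$ gives \eqref{eq:cbound-1} with $C^{(1)} = \tfrac18 v_{\min}$. In the bounded-spin case every bound above — $v_{\min}$, the third-moment bound, and $C$ — is uniform in $\omega$ because the dominating measures $\pi^\infty_\pm$ do not depend on $\omega$, so $C^{(1)}$ is uniform in $\omega$ as claimed.

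\textbf{Main obstacle.} I expect the delicate point to be the treatment of the first-order (in $t$) term in the expansion of the perturbation $\Delta U(t)$: one must show it does not contribute a term linear in $t$ to $\mathrm{Re}\,\Delta U(t)$ that would spoil the $t^2$ decay — either because it is purely imaginary, or because it is cancelled by the corresponding shift in the single-site factors, or because it is genuinely $O(t^2/D_k\cdot|\tilde\Lambda_k|)$ after resummation. A secondary obstacle is establishing the uniform (in $k$ and, in the bounded case, in $\omega$) lower bound $v_{\min}>0$ on the single-site variances, which requires ruling out degeneration of the tilted measures $\nu^{\omega}_{\tilde\Lambda_k,\beta}$ to a point mass — straightforward given the Lebesgue a-priori measure, but needs to be stated.
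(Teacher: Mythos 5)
Your proposal follows essentially the same route as the paper's: factor via \eqref{eq:diff-fin}, Taylor-expand each factor to second order, and combine. Your treatment of the single-site product matches Lemma \ref{lem:prodnu} in substance (you expand the characteristic function directly to $O(\theta^3)$; the paper expands its logarithm, but the variance comparison against $\pi^\omega_-$ via \eqref{eq:picov} is the same device, and the uniform lower bound $v_{\min}>0$ you call for is exactly what \eqref{eq:picov} delivers).

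The one point you leave genuinely open is the one you flag as the ``main obstacle,'' and the paper resolves it by the first of the three alternatives you list: $[\Delta U]'(0)$ is purely imaginary. This is not an obstacle but an immediate structural fact. At $t=0$ every $\Gamma(R_i)(0)$ is a real integral (the factor $\prod_x e^{it\sigma_x/\sqrt{D_k}}$ equals $1$), while $\Gamma'(R_i)(0)$ carries an explicit prefactor $\tfrac{i}{\sqrt{D_k}}$ multiplying the real integral
\[
\int\prod_{\{x,y\}\in R_i}\bigl(e^{\beta J_{xy}\sigma_x\sigma_y}-1\bigr)\,\sigma_{\underline R_i}\,\nu^\omega_{\tilde\Lambda_k,\beta}(\diff\sigma^{\underline R_i})\;.
\]
Hence $\tfrac{d}{dt}\prod_i\Gamma(R_i)\bigl|_{t=0} = \sum_j\Gamma'(R_j)(0)\prod_{i\ne j}\Gamma(R_i)(0)$ is purely imaginary for every cluster, so $\Delta U'(0)$ is too, and the absolute value kills the linear term. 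Your guess that the integrand might be ``odd'' was a red herring — it is the factor $i$, not any parity of the integrand, that does the job. Also note that, once this is in place, the second-order remainder $\tfrac{t^2}{2}\mathrm{Re}\,U_2''(\theta)$ is bounded by $\varepsilon t^2|\tilde\Lambda_k|/D_k$ for \emph{all} $t$, because in Lemma \ref{lem:deltau} the factor $\sigma_{\underline R}^2$ is dominated by $e^{2\sum_{x}|\sigma_x|}$ uniformly in the phase; you therefore do not need a second cutoff $\delta''$, and your final $\delta$ is determined solely by the single-site expansion. Your other two proposed mechanisms (cancellation against the single-site factor, or an $O(t^2)$ resummation) are not what happens and would not produce the required rate.
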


Before proceeding to the proof, let us see why this bound yields the desired integral bound.  Indeed, as $\tilde \Lambda_k/D_k| \to r_0^d/L$ due to conditions (i)-(ii) of the local CLT (Definition \ref{def:lclt}), we obtain that for $k$ large enough
\be\label{eq:int-soft}
\int_{\bigl\{B\le \card t\le \delta \sqrt{D_k}\bigr\}}  \left|\mu^{\omega}_{\Lambda_k,\beta}(e^{i t \bar{S}_k})\right| \diff t
\;\le\; \int_B^\infty e^{-t^2 C^{(1)}(r_0^d/L-\varepsilon)} \diff t 
\ee
and the right-hand side can be made arbitrarily small by taking $B$ sufficiently large.  For bounded spins, this bound is uniform in $\omega$, and hence the supremum in \eqref{eq:bound-rfr} applies.

The proof of the proposition is divided in two parts, respectively addressing the finite-range and the perturbation parts in \eqref{eq:diff-fin}:
\begin{description}
\item[First part (Lemma \ref{lem:prodnu}):] For $\delta$ sufficiently small, there exists $c_1>0$ such that
\be
\card{\prod_{x\in \tilde{\Lambda}_k}\nu^{\omega}_{\tilde{\Lambda}_k,\beta}\left(\exp\Bigl(\frac{it \sigma_x}{\sqrt{D_k}}\Bigr)\right)}
\;\le\; \exp \left(-t^{2} c_1\frac{|\tilde{\Lambda}_{k}|}{D_{k}}\right)
\ee
for all $t$ with $\card t\le \delta \sqrt{D_k}$.
\item[Second part (Lemma \ref{lem:deltau}):] Given $\varepsilon>0$, for $r_0$ sufficiently large,
\be
\card{e^{\Delta U(t)}}\;\le\; \exp \left(t^{2} \varepsilon\frac{|\tilde{\Lambda}_{k}|}{D_{k}}\right)
\ee 
for all $t>0$.
\end{description}

The combination of the lemmas proves \eqref{eq:cbound-1} with $C^{(1)}=c_1-\varepsilon$.

Following an already established practice \cite{CDT,EM,FP}, in the proofs below we will resort to the second-order Taylor identity
\be\label{eq:taylor2}
F(t)\;=\; F(0) +t F'(0) +\frac{t^2}{2}F''(\theta)
\ee
for some $\theta(F)\in[0,t]$.

\begin{lemma}\label{lem:prodnu}

There exists $\delta_0(\beta,\omega)$ such that if $\delta\le \delta_0$, and there exists $c_1(\beta,\omega)>0$ with
\be\label{eq:prodnu}
\card{\nu^{\omega}_{\tilde{\Lambda}_k,\beta}\left(\exp\Bigl(\frac{it \sigma_x}{\sqrt{D_k}}\Bigr)\right)}
\;\le\; e^{-t^{2} c_1/D_{k}}
\ee
for all $t$ with $\card t\le \delta \sqrt{D_k}$. For bounded spins, both $\delta_0$ and $c_1$ can be chosen uniformly in $\omega$.  
\end{lemma}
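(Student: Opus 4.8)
The plan is to estimate the single-site characteristic function $\nu^{\omega}_{\tilde{\Lambda}_k,\beta}(\exp(it\sigma_x/\sqrt{D_k}))$ by a second-order Taylor expansion of the exponential, using \eqref{eq:taylor2} applied to the function $u\mapsto e^{iu\sigma_x}$ with $u=t/\sqrt{D_k}$. Writing $m^{\omega}_x:=\nu^{\omega}_{\tilde{\Lambda}_k,\beta}(\sigma_x)$ and $v^{\omega}_x:=\nu^{\omega}_{\tilde{\Lambda}_k,\beta}(\sigma_x^2)-(m^{\omega}_x)^2$ for the mean and variance under the normalized single-spin measure, the expansion gives
\be
\nu^{\omega}_{\tilde{\Lambda}_k,\beta}\Bigl(\exp\Bigl(\tfrac{it\sigma_x}{\sqrt{D_k}}\Bigr)\Bigr)
\;=\; 1 + \frac{it\, m^{\omega}_x}{\sqrt{D_k}} - \frac{t^2}{2D_k}\,\nu^{\omega}_{\tilde{\Lambda}_k,\beta}\bigl(\sigma_x^2\, e^{i\theta_x\sigma_x/\sqrt{D_k}}\bigr)
\ee
for some $\theta_x=\theta_x(\sigma_x,t)\in[0,t]$. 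The modulus of this quantity is at most $1$ always (it is a characteristic function), but we need a quantitative gain of order $t^2/D_k$. The key point is that $|1+iz - w|^2 = (1-\mathrm{Re}\,w)^2 + (z-\mathrm{Im}\,w)^2$, so one extracts $|{\cdot}|\le 1 - \mathrm{Re}\,w + \tfrac12|w|^2 + \tfrac12 z^2 + (\text{cross terms})$; substituting $z = t m^{\omega}_x/\sqrt{D_k}$ and $w = \tfrac{t^2}{2D_k}(\dots)$, the dominant negative contribution is $-\tfrac{t^2}{2D_k}\,\mathrm{Re}\,\nu^{\omega}_{\tilde{\Lambda}_k,\beta}(\sigma_x^2 e^{i\theta_x\sigma_x/\sqrt{D_k}})$, while the competing positive terms are $O(t^4/D_k^2)$ together with the $z^2 = t^2 (m^{\omega}_x)^2/D_k$ term. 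To kill the latter, one should instead center: replace $\sigma_x$ by $\sigma_x - m^{\omega}_x$ from the start (equivalently multiply the characteristic function by the unimodular phase $e^{-itm^{\omega}_x/\sqrt{D_k}}$, which does not change the modulus), so that the linear term disappears and the leading correction is genuinely $-\tfrac{t^2}{2D_k}v^{\omega}_x + O(t^4/D_k^2)$.

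The main obstacle is twofold and both issues are handled by the hypotheses collected in Section 2. First, one needs a \emph{uniform lower bound} $v^{\omega}_x \ge 2c_1 > 0$ on the single-site variance: this follows because $\nu^{\omega}_{\tilde{\Lambda}_k,\beta}$ is squeezed between the $\Lambda$-independent measures $\pi^{\omega}_-$ and $\pi^{\omega}_+$ of \eqref{pi} (and between $\pi^\infty_\pm$ in the bounded case), and these have strictly positive variance and moments controlled by the superstability/sub-Gaussian assumptions; the variance of a measure pinched between two fixed equivalent measures with bounded densities ratio is bounded below by a positive constant depending only on $\beta$ and $\vartheta(\omega)$ (resp. only on $\vartheta$ in the bounded case), uniformly in $k$ and in $x\in\tilde\Lambda_k$. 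Second, one must control the remainder: $\bigl|\nu^{\omega}_{\tilde{\Lambda}_k,\beta}(\sigma_x^2 e^{i\theta_x\sigma_x/\sqrt{D_k}}) - v^{\omega}_x\bigr| \le \tfrac{|t|}{\sqrt{D_k}}\,\nu^{\omega}_{\tilde{\Lambda}_k,\beta}(|\sigma_x|^3)$, and the third moment is bounded uniformly by $\int |\sigma|^3\,\pi^{\omega}_+(\diff\sigma) =: M_3(\omega) < \infty$ thanks to the Gaussian domination in Definition \ref{superstability}. Hence
\be
\Bigl|\nu^{\omega}_{\tilde{\Lambda}_k,\beta}\Bigl(\exp\Bigl(\tfrac{it\sigma_x}{\sqrt{D_k}}\Bigr)\Bigr)\Bigr|
\;\le\; 1 - \frac{t^2}{2D_k}\Bigl(v^{\omega}_x - \frac{|t|}{\sqrt{D_k}}M_3 - C\frac{t^2}{D_k}\Bigr)
\ee
for a constant $C$ controlling the quadratic-in-$w$ error (again uniform via $\pi^{\omega}_+$-moments). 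Choosing $\delta_0$ small enough that $\tfrac{|t|}{\sqrt{D_k}}M_3 + C\tfrac{t^2}{D_k} \le \delta_0 M_3 + C\delta_0^2 \le c_1$ whenever $|t|\le\delta_0\sqrt{D_k}$, the bracket is at least $c_1$, and using $1-u\le e^{-u}$ we obtain $|\nu^{\omega}_{\tilde{\Lambda}_k,\beta}(\exp(it\sigma_x/\sqrt{D_k}))| \le e^{-t^2 c_1/D_k}$, which is \eqref{eq:prodnu}. In the bounded-spin case every bound above comes from $\pi^\infty_\pm$, which are independent of $\omega$, so $\delta_0$ and $c_1$ are uniform in $\omega$; this is exactly what is needed to take the supremum over $\omega$ in \eqref{eq:bound-rfr}.

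Finally, the passage to the product (the First part of Proposition \ref{prop.xuan1}) is immediate: multiplying the single-site bound over $x\in\tilde\Lambda_k$ gives $\prod_{x\in\tilde\Lambda_k}|\nu^{\omega}_{\tilde{\Lambda}_k,\beta}(\exp(it\sigma_x/\sqrt{D_k}))| \le e^{-t^2 c_1 |\tilde\Lambda_k|/D_k}$, which is the claimed estimate with the same $c_1$ and $\delta$. I expect the bookkeeping of the error constant $C$ (the quadratic term in $w$, which also secretly involves fourth moments) to be the only slightly delicate computation, but it is entirely routine given the uniform moment bounds from superstability.
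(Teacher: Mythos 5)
Your plan is correct and uses the same two ingredients as the paper's own proof: a second-order Taylor expansion in $t$ of the single-site characteristic function, producing a leading variance term, and a uniform lower bound on that variance coming from the pinching $\pi^{\omega}_-\le\nu^{\omega}_{\tilde\Lambda_k,\beta}\le\pi^{\omega}_+$ (with $\pi^\infty_\pm$ in the bounded case). The one technical difference is that you Taylor-expand the characteristic function itself, which forces you to remove the mean by multiplying by a unimodular phase and then estimate $|1+iz-w|$; the paper instead Taylor-expands $\log\nu^{\omega}_{\tilde\Lambda_k,\beta}\bigl(e^{it\sigma_x/\sqrt{D_k}}\bigr)$, so that the linear term is automatically purely imaginary and taking the modulus directly gives $\exp\bigl[-\tfrac{t^2}{2D_k}\mathrm{Re}\bigl(M_2^{(k)}(\theta)-[M_1^{(k)}(\theta)]^2\bigr)\bigr]$, bypassing the $|1+iz-w|$ bookkeeping entirely. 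The one step you should not leave as an assertion is the uniform positive lower bound on the single-site variance: the paper's argument is to write it as $\tfrac12\iint(\sigma_x-\sigma_y)^2\,\nu^{\omega}_{\tilde\Lambda_k,\beta}(\diff\sigma_x)\,\nu^{\omega}_{\tilde\Lambda_k,\beta}(\diff\sigma_y)$ and then exploit the nonnegativity of the integrand to replace $\nu^{\omega}_{\tilde\Lambda_k,\beta}$ by the $k$- and $x$-independent measure $\pi^{\omega}_-$; ``pinching implies a positive variance lower bound'' is indeed true, but only because of this duplicated-variable identity (or an equivalent observation), and it is worth stating explicitly.
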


\begin{proof}
The use of logarithms turns out to be convenient.  Write
\be
\nu^{\omega}_{\tilde \Lambda_k,\beta}\left(\exp\Bigl(\frac{it \sigma_x}{\sqrt{D_k}}\Bigr)\right)\;=\; \exp\left[\ln\left( \nu^{\omega}_{\tilde \Lambda_k,\beta}\left(\exp\Bigl(\frac{it \sigma_x}{\sqrt{D_k}}\Bigr)\right)\right)\right] 
\ee
and apply to the exponent the second order Taylor identity \eqref{eq:taylor2}.  The result is
\be\label{eq:m2m1}
\nu^{\omega}_{\tilde \Lambda_k,\beta}\left(\exp\Bigl(\frac{it \sigma_x}{\sqrt{D_k}}\Bigr)\right)\;=\; \exp\Bigl[ i\frac{t}{\sqrt{D_k}}M_1^{(k)}(0) - \frac{t^2}{2 D_k}\bigl(M_2^{(k)}(\theta)-[M_1^{(k)}(\theta)]^2\bigr)\Bigr]
\ee
with
\be
M_\ell^{(k)}(t)\;=\; \frac{ \nu^{\omega}_{\tilde \Lambda_k,\beta}\Bigl[\sigma_x^\ell\,\exp\Bigl[(i\frac{t}{\sqrt{D_k}}\sigma_x\Bigr)\Bigr]}{\nu^{\omega}_{\tilde \Lambda_k,\beta}\Bigl[\exp\Bigl[(i\frac{t}{\sqrt{D_k}}\sigma_x\Bigr)\Bigr]}.
\ee
Hence,
\be\label{eq:inter}
\card{\nu^{\omega}_{\tilde{\Lambda}_k,\beta}\left(\exp\Bigl(\frac{it \sigma_x}{\sqrt{D_k}}\Bigr)\right)}\;=\;
\exp\Bigl[  - \frac{t^2}{2 D_k}{\mathrm Re}\bigl(M_2^{(k)}(\theta)-[M_1^{(k)}(\theta)]^2\bigr)\Bigr]\;.
\ee

In detail,
\be\label{eq:abab}
M_2^{(k)}(\theta)-[M_1^{(k)}(\theta)]^2\;=\;\frac{C+iD}{A+iB}-\biggl(\frac{E+iF}{A+iB}\biggr)^2
\ee
with
\begin{equation}
    A=\int_{\mathbb R}\cos\left(\frac{\theta}{\sqrt{D_k}}\sigma_x\right)\nu^{\omega}_{\tilde{\Lambda}_k,\beta}(\diff \sigma_x)\;,\quad 
    B= \int_{\mathbb R}\sin\left(\frac{\theta}{\sqrt{D_k}}\sigma_x\right)\nu^{\omega}_{\tilde{\Lambda}_k,\beta}(\diff\sigma_x)\;,
\end{equation}
\begin{equation}
    C=\int_{\mathbb R}\cos\left(\frac{\theta}{\sqrt{D_k}}\sigma_x\right)\sigma_x^2\,\nu^{\omega}_{\tilde{\Lambda}_k,\beta}(d\sigma_x)\;,\quad 
    D= \int_{\mathbb R}\sin\left(\frac{\theta}{\sqrt{D_k}}\sigma_x\right)\sigma_x^2\,\nu^{\omega}_{\tilde{\Lambda}_k,\beta}(\diff\sigma_x)\;, 
\end{equation}
\begin{equation}
    E=\int_{\mathbb R}\cos\left(\frac{\theta}{\sqrt{D_k}}\sigma_x\right)\sigma_x\,\nu^{\omega}_{\tilde{\Lambda}_k,\beta}(\diff\sigma_x)\;,\quad 
    F= \int_{\mathbb R}\sin\left(\frac{\theta}{\sqrt{D_k}}\sigma_x\right)\sigma_x\,\nu^{\omega}_{\tilde{\Lambda}_k,\beta}(\diff\sigma_x)\;.
\end{equation}
These expressions can be bound using the relations
\be
\cos x=1-\frac{x^2}{2}\cos \gamma\quad,\quad \sin x= x \cos\gamma\quad,\quad \card \gamma \le \card x\;.
\ee
The leftmost equality  is due to the second-order Taylor identity \eqref{eq:taylor2}, while the rightmost one is due to the first-order Taylor identity (a.k.a. mean-value theorem).  They imply
\begin{align*}
\card{A-1}&\le\delta^2M_2^+\\
\card B &\le \delta M_1^+\\
\card{C-M_2^{(k)}(0)} &\le \delta^2 M_4^+\\
\card D &\le \delta M_3^+\\
\card{E-M_1^{(k)}(0)} &\le \delta^2 M_3^+\\
\card F &\le \delta M_2^+
\end{align*}
with
\be
M_\ell^+\;=\;\int_{\mathbb R}\card{\sigma_x}^\ell\,\pi_+^\omega(\diff\sigma_x)\;.
\ee
Thus,
\begin{eqnarray}
{\mathrm Re}\bigl[M_2^{(k)}(\theta)-[M_1^{(k)}(\theta)]^2\bigr] 
&=& {\mathrm Re}\biggl[\frac{M_2^{(k)}(0)+O(\delta)}{1+O(\delta)}-\biggl(\frac{M_1^{(k)}(0)+O(\delta)}{1+O(\delta)}\biggr)^2\biggr]\nonumber\\[5pt]
&=& M_2^{(k)}(0)-[M_1^{(k)}(0)]^2+O(\delta)\;.
\end{eqnarray}
To obtain a $k$-independent bound we proceed as follows
\begin{eqnarray}\label{eq:picov}
M_2^{(k)}(0)-[M_1^{(k)}(0)]^2&=& \int_{\mathbb R}\int_{\mathbb R}\bigl[ \card{\sigma_k}^2-\card{\sigma_x}\card{\sigma_y}\bigr]\,\nu^{\omega}_{\tilde{\Lambda}_k,\beta}(\diff\sigma_x)\,\nu^{\omega}_{\tilde{\Lambda}_k,\beta}(\diff\sigma_y)\nonumber\\
&=& \frac12 \int_{\mathbb R}\int_{\mathbb R}\bigl[ \card{\sigma_x}-\card{\sigma_y}\bigr]^2\,\nu^{\omega}_{\tilde{\Lambda}_k,\beta}(\diff\sigma_x)\,\nu^{\omega}_{\tilde{\Lambda}_k,\beta}(\diff\sigma_y)\nonumber\\
&\ge&  \frac12 \int_{\mathbb R}\int_{\mathbb R}\bigl[ \card{\sigma_x}-\card{\sigma_y}\bigr]^2\,\pi^{\omega}_-(\diff\sigma_x)\,\pi^{\omega}_-(\diff\sigma_y)\\[5pt]
&=& M_2^- - [M_1^-]^2\nonumber
\end{eqnarray}
with
\be
M_\ell^-\;=\; \int_{\mathbb R} \card{\sigma_x}^\ell\,\pi_-^\omega(\diff \sigma_x)\;.
\ee
This proves \eqref{eq:prodnu} with $c_1$ slightly less than $\bigl(M_2^--[M_1^-]^2\bigr)/2$, which is a positive number as shown by \eqref{eq:picov}.  

For bounded spins, $\omega$-uniform bounds are obtained by replacing in all the previous bounds $\pi^\omega_\pm$ by $\pi^\infty_\pm$.
\end{proof}

\begin{lemma}\label{lem:deltau}
For every $\varepsilon>0$ there exists 
\begin{itemize}
\item[(i)] for bounded spins, a dilution $r_{\varepsilon}(\beta)$ and a constant $c_1(\beta,\omega)>0$, and
\item[(ii)] for unbounded spins ($r_0=1$) an inverse temperature $\beta_\varepsilon(\omega)$ and a constant $c_1(\beta,\omega)>0$,
\end{itemize}
such that
\be
\card{e^{\Delta U(t)}}\;\le\; \exp \left(t^{2} \varepsilon\frac{|\tilde{\Lambda}_{k}|}{D_{k}}\right)
\ee 
for all $t>0$ and, respectively, (i) all $r_0\ge r_\varepsilon$, and (ii) all $\beta\le \beta_\varepsilon$.
\end{lemma}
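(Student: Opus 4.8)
The plan is to reduce the statement to a uniform second-derivative estimate on $\Delta U$. Since $\card{e^{\Delta U(t)}}=e^{\mathrm{Re}\,\Delta U(t)}$, it is enough to prove $\mathrm{Re}\,\Delta U(t)\le \varepsilon\,t^{2}|\tilde\Lambda_k|/D_k$. By Proposition \ref{prop:cluster} the series \eqref{eq:difference-final} is absolutely convergent, dominated by $\card U_2$, so $\Delta U$ is an analytic function of $t$ and may be differentiated term by term. I would apply the second-order Taylor identity \eqref{eq:taylor2} to $\mathrm{Re}\,\Delta U$, namely $\mathrm{Re}\,\Delta U(t)=\mathrm{Re}\,\Delta U(0)+t\,\mathrm{Re}\,\Delta U'(0)+\tfrac{t^2}{2}\,\mathrm{Re}\,\Delta U''(\theta)$ for some $\theta\in[0,t]$. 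The zeroth-order term vanishes because $\Delta U(0)=0$. The crucial point is that the first-order term also vanishes: differentiating \eqref{eq:summed-rho} at $t=0$ gives $\Gamma(R)'(0)=\tfrac{i}{\sqrt{D_k}}\int_{\mathbb R^{|\underline R|}}\bigl(\sum_{x\in\underline R}\sigma_x\bigr)\prod_{\{x,y\}\in R}(e^{\beta J_{xy}\sigma_x\sigma_y}-1)\prod_{x\in\underline R}\nu^{\omega}_{\tilde\Lambda_k,\beta}(\diff\sigma_x)$, which is $i$ times a real number, while every $\Gamma(R_i)(0)$ and every Ursell coefficient $\omega^T_n$ is real; hence the $t$-linear part of \eqref{eq:difference-final} is purely imaginary, so $\mathrm{Re}\,\Delta U'(0)=0$. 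Therefore $\mathrm{Re}\,\Delta U(t)\le\tfrac{t^2}{2}\card{\Delta U''(\theta)}$, and everything comes down to an estimate $\card{\Delta U''(\theta)}\le \kappa(\widehat\beta_{r_0})\,|\tilde\Lambda_k|/D_k$, uniform in $k$ and $\theta$, with $\kappa(\widehat\beta_{r_0})\to0$.

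To bound $\card{\Delta U''(\theta)}$ I would differentiate $\prod_i\Gamma(R_i)(t)$ twice and distribute the derivatives among the factors. Each derivative hitting a factor $\Gamma(R)(t)$ pulls an extra $\tfrac{i}{\sqrt{D_k}}\sum_{x\in\underline R}\sigma_x$ into the integrand while the oscillating phases keep modulus one, so that, for every real $t$, $\card{\Gamma(R)(t)}\le\card\Gamma(R)$, $\card{\Gamma(R)'(t)}\le D_k^{-1/2}\Gamma^{[1]}(R)$ and $\card{\Gamma(R)''(t)}\le D_k^{-1}\Gamma^{[2]}(R)$, where $\Gamma^{[m]}(R):=\int_{\mathbb R^{|\underline R|}}\bigl(\sum_{x\in\underline R}\card{\sigma_x}\bigr)^m\prod_{\{x,y\}\in R}\bigl(e^{\beta\card{J_{xy}\sigma_x\sigma_y}}-1\bigr)\prod_{x\in\underline R}\nu^{\omega}_{\tilde\Lambda_k,\beta}(\diff\sigma_x)$ and $\card\Gamma(R)$ is as in \eqref{eq:summed-rho-abs}. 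A telescoping estimate of $\tfrac{d^2}{dt^2}\bigl[\prod_i\Gamma(R_i)(t)-\prod_i\Gamma(R_i)(0)\bigr]$, in which all undifferentiated factors are replaced by $\card\Gamma(R_i)$, then gives
\be
\card{\Delta U''(\theta)}\;\le\;\frac{1}{D_k}\sum_{n=1}^{\infty}\frac{1}{n!}\sum_{(R_1,\ldots,R_n)\in\mathcal R_2^n}\card{\omega^T_n(R_1,\ldots,R_n)}\Bigl[\sum_{j}\Gamma^{[2]}(R_j)\prod_{i\ne j}\card\Gamma(R_i)+\sum_{j\ne j'}\Gamma^{[1]}(R_j)\,\Gamma^{[1]}(R_{j'})\prod_{i\ne j,j'}\card\Gamma(R_i)\Bigr].
\ee
These are precisely the cluster sums controlled by the convergence criterion \eqref{eq:gk1}--\eqref{eq:cluster-u2-R-abs}, except that one polymer (first sum) or two polymers (second sum) now carry a polynomial insertion $\bigl(\sum_{x\in\underline R}\card{\sigma_x}\bigr)^m$ with $m\le 2$. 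I would re-run the partition-scheme/tree-graph computation of Lemma \ref{lem:cluster2} with these insertions present: each decorated polymer costs an extra combinatorial factor at most $\card{\underline R}^{2}$ (re-absorbed into $e^{a\card{\underline R}}$ by a harmless increase of $a$) and, at the affected vertices, replaces the one-dimensional integral $\int\card{\sigma_x}^{d^\tau_i}e^{2\widehat\beta_{r_0}\sigma_x^2}\nu^{\omega}_{\tilde\Lambda_k,\beta}(\diff\sigma_x)$ by $\int\card{\sigma_x}^{d^\tau_i+2}e^{2\widehat\beta_{r_0}\sigma_x^2}\nu^{\omega}_{\tilde\Lambda_k,\beta}(\diff\sigma_x)$, still finite and $\Lambda$-independently bounded through $\pi^{\omega}_+$ under the dilution condition \eqref{eq:dil-1}. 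Using $\card{U_2(R)}\le e^{a\card{\underline R}}$ from Corollary \ref{corollary1.1} to resum the undecorated part (and pinning two polymers for the second sum, which thus carries two small factors), one gets $\card{\Delta U''(\theta)}\le |\tilde\Lambda_k|\,D_k^{-1}\,\kappa(\widehat\beta_{r_0})$ with $\kappa(\widehat\beta_{r_0})\to0$ as $\widehat\beta_{r_0}\to0$, uniformly in $k$ and $\theta$ (and in $\omega$ in the bounded case, working throughout with $\pi^{\infty}_\pm$).

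Finally, since $\widehat\beta_{r_0}\to0$ as $r_0\to\infty$ in the bounded case and $\widehat\beta_1(\beta)\to0$ as $\beta\to0$ in the unbounded case, I would choose $r_\varepsilon(\beta)$, respectively $\beta_\varepsilon(\omega)$, so that $\tfrac12\kappa(\widehat\beta_{r_0})\le\varepsilon$; then $\mathrm{Re}\,\Delta U(t)\le\tfrac{t^2}{2}\card{\Delta U''(\theta)}\le\varepsilon\,t^{2}|\tilde\Lambda_k|/D_k$ for all $t$, which is the asserted bound (the constant $c_1$ of the statement is simply inherited from Lemma \ref{lem:prodnu}, and one takes $\varepsilon<c_1$ so that the combination yields $C^{(1)}=c_1-\varepsilon>0$). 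I expect the main obstacle to be the bookkeeping in the decorated tree-graph estimate above -- in particular keeping track of up to two marked vertices carrying extra powers of $\card\sigma$ and verifying that the decorated per-site polymer sums remain small -- but structurally this is the same computation already carried out in Lemma \ref{lem:cluster2}, the polynomial insertions being rendered harmless by superstability; the one genuinely new ingredient is the purely-imaginary cancellation of the $t$-linear term, which is exactly what upgrades the bound from order $t$ to the needed order $t^{2}$.
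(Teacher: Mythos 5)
Your overall strategy is correct and coincides with the paper's: expand $\Delta U(t)$ to second order, observe that $\Delta U(0)=0$ and that $\mathrm{Re}\,\Delta U'(0)=0$ because $\Delta U'$ is purely imaginary, and then show a uniform bound $\card{\Delta U''(\theta)}\le \kappa(\widehat\beta_{r_0})\,|\tilde\Lambda_k|/D_k$ with $\kappa$ small. The identification of the purely imaginary first derivative as the mechanism upgrading the bound from order $t$ to order $t^2$ is right and is exactly what the paper does.

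The gap is in the treatment of the cross terms in $\Delta U''$. After differentiating $\prod_i\Gamma(R_i)$ twice you correctly get a ``diagonal'' piece $\sum_j\Gamma^{[2]}(R_j)\prod_{i\ne j}\card\Gamma(R_i)$ and an ``off-diagonal'' piece $\sum_{j\ne j'}\Gamma^{[1]}(R_j)\Gamma^{[1]}(R_{j'})\prod_{i\ne j,j'}\card\Gamma(R_i)$, and you propose to control the off-diagonal piece by ``pinning two polymers'' via Corollary \ref{corollary1.1}. But that corollary only resums clusters pinned at a \emph{single} polymer; a two-pinned cluster sum is not a direct consequence of it, and a naive two-pinned bound would threaten a $|\tilde\Lambda_k|^2$ contribution unless one also extracts decay of the connected two-point weight in the distance between the two pins (or redistributes the two decorations onto one polymer via an AM--GM/Cauchy--Schwarz step such as $\Gamma^{[1]}(R_j)\Gamma^{[1]}(R_{j'})\le\tfrac12\bigl[\Gamma^{[2]}(R_j)\card\Gamma(R_{j'})+\Gamma^{[2]}(R_{j'})\card\Gamma(R_j)\bigr]$, at the cost of an extra factor of $n$ to be reabsorbed by the $e^{a|\underline R|}$'s). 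None of this is spelled out in your sketch, and ``carries two small factors'' is not by itself a valid mechanism. The paper avoids the whole issue with a short but crucial observation you do not use: the two pieces recombine pointwise into $\bigl(\sum_i\card{\sigma_{\underline R_i}}\bigr)^2$, which is then dominated by $\exp\bigl(2\sum_i\card{\sigma_{\underline R_i}}\bigr)=\prod_i e^{2\card{\sigma_{\underline R_i}}}$. This converts the nonlocal quadratic weight into a \emph{product over polymers} of local exponential weights, i.e.\ into a modified activity $\bigl|\widetilde\Gamma\bigr|(R)$, so the entire sum collapses to the same one-pinned cluster estimate as in Proposition \ref{prop:cluster} with $\Theta$ and $G$ replaced by $\widetilde\Theta$ and $\widetilde G$. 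That factorization is the missing step in your argument; once you insert it, the remainder of your plan (superstability controlling the modified single-site integrals, $\widehat\beta_{r_0}\to0$ via dilution or small $\beta$) goes through as you describe.
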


\begin{proof}

We apply the second-order Taylor expansion to $\Delta U(t)$.  By the convergence result proven in Proposition \ref{prop:cluster}, the derivatives are the series defined by termwise differentiation.  It is straightforward to see that $[\Delta U]'(t)$ is purely imaginary, hence
\be
\card{\exp\bigl[\Delta U(t)\bigr]}\;=\; \exp\Big[-\frac{t^2}{2} {\mathrm Re} \, U_2''(\theta)\Bigr]
\ee
for some $\theta\in[0,t]$, and
\be\label{eq:u2derivative}
 U_2''(t)\;=\; \sum_{n=1}^{\infty}\frac{1}{n!}\sum_{(R_1,\ldots, R_n)\in \mathcal{R}_2^n} \omega^T_n(R_1,\ldots, R_n)\,\frac{d^2}{dt^2}\left[\prod_{i=1}^n\Gamma(R_i)(t)\right]\;.
\ee
We have 
\be
\frac{d^2}{dt^2}\left[\prod_{i=1}^n\Gamma(R_i)\right] \;=\; \sum_{i=1}^n \Gamma''(R_i) \prod_{\substack{1\le j\le n\\ j\neq i}}\Gamma(R_j)
+ \sum_{\substack{1\le i, j\le n\\ i\neq j}}  \Gamma'(R_i) \Gamma'(R_j) \prod_{\substack{1\le k\le n\\ k\neq i,j}}\Gamma(R_k)
\ee
with
\begin{eqnarray}
\Gamma'(R)(t) &=& \frac{1}{\sqrt{D_k}}\int_{\mathbb R^{|\underline{R}|}}\Bigl[\prod_{\{x,y\}\in R}\left(e^{\beta J_{xy}\sigma_x\sigma_y}-1\right)\Bigr]\,\sigma_{\underline R} \,\exp\left(\frac{it\sigma_{\underline R}}{\sqrt{D_k}}\right)\nu^{\omega}_{\tilde{\Lambda}_k,\beta}(\diff\sigma^{\underline R})\\
\Gamma''(R)(t) &=& \frac{1}{D_k} \int_{\mathbb R^{|\underline{R}|}} \Bigl[\prod_{\{x,y\}\in R}\left(e^{\beta J_{xy}\sigma_x\sigma_y}-1\right)\Bigr] \, (\sigma_{\underline R})^2\,\exp\left(\frac{it\sigma_{\underline R}}{\sqrt{D_k}}\right)\nu^{\omega}_{\tilde{\Lambda}_k,\beta}(\diff\sigma^{\underline R})\;.
\end{eqnarray}

For compactness, we are denoting
\be
\sigma_C:=\sum_{x\in C} \sigma_x\quad,\quad 
\nu^{\omega}_{\tilde{\Lambda}_k,\beta}(\diff\sigma^C):= \prod_{x\in C}\nu^{\omega}_{\tilde{\Lambda}_k,\beta}(\diff\sigma_x)\;.
\ee

Proceeding as in the proof of Lemma \ref{lem:cluster2}, we bound
\begin{align}
\card{\Gamma'(R)(t)}&\le\; \card{\Gamma'}(R) \;:=\; \frac{1}{\sqrt{D_k}}\int_{\mathbb R^{|\underline{R}|}}\Bigl[\prod_{\{x,y\}\in R}\left(e^{\beta \card{J_{xy}\sigma_x\sigma_y}}-1\right)\Bigr]\,\card{\sigma_{\underline R}} \,\nu^{\omega}_{\tilde{\Lambda}_k,\beta}(\diff\sigma^{\underline R})\\
\card{\Gamma''(R)(t)}&\le\; \card{\Gamma''}(R)\;:=\; \frac{1}{D_k} \int_{\mathbb R^{|\underline{R}|}} \Bigl[\prod_{\{x,y\}\in R}\left(e^{\beta \card{J_{xy}\sigma_x\sigma_y}}-1\right)\Bigr] \, \card{\sigma_{\underline R}}^2\,\exp\left(\frac{it\sigma_{\underline R}}{\sqrt{D_k}}\right)\nu^{\omega}_{\tilde{\Lambda}_k,\beta}(\diff\sigma^{\underline R})\;.
\end{align}
In turns, these bounds imply that
\begin{eqnarray}
\lefteqn{\card{\frac{d^2}{dt^2}\left[\prod_{i=1}^n\Gamma(R_i)\right]}}\nonumber\\
&=&\frac{1}{D_k}\int_{\mathbb R^{\sum_i|\underline{R_i}|}} \biggl[\prod_{i=1}^n \prod_{\{x,y\}\in R_i}\left(e^{\beta \card{J_{xy}\sigma_x\sigma_y}}-1\right)\biggr]\biggl[\sum_{i=1}^n \card{\sigma_{\underline R_i}}^2 +  \sum_{\substack{1\le i, j\le n\\ i\neq j}}  \card{\sigma_{\underline R_i}} \bigl|\sigma_{\underline R_j}\bigr| \biggr] \prod_{i=1}^n \nu^{\omega}_{\tilde{\Lambda}_k,\beta}(\diff\sigma^{\underline R_i})\nonumber\\
&=&\frac{1}{D_k}\int_{\mathbb R^{\sum_i|\underline{R_i}|}} \biggl[\prod_{i=1}^n \prod_{\{x,y\}\in R_i}\left(e^{\beta \card{J_{xy}\sigma_x\sigma_y}}-1\right)\biggr]\biggl[\sum_{i=1}^n \card{\sigma_{\underline R_i}}\biggr]^2 \prod_{i=1}^n \nu^{\omega}_{\tilde{\Lambda}_k,\beta}(\diff\sigma^{\underline R_i})\nonumber\\
&\le&\frac{1}{D_k}\int_{\mathbb R^{\sum_i|\underline{R_i}|}} \biggl[\prod_{i=1}^n \prod_{\{x,y\}\in R_i}\left(e^{\beta \card{J_{xy}\sigma_x\sigma_y}}-1\right)\biggr]\exp\biggl(2\sum_{i=1}^n \card{\sigma_{\underline R_i}}\biggr) \prod_{i=1}^n \nu^{\omega}_{\tilde{\Lambda}_k,\beta}(\diff\sigma^{\underline R_i})\;.
\end{eqnarray}
Also, for the single-polymer cluster,
\be
\card{\Gamma''}(R)\;\le\; \frac{1}{D_k} \int_{\mathbb R^{|\underline{R}|}} \Bigl[\prod_{\{x,y\}\in R}\left(e^{\beta \card{J_{xy}\sigma_x\sigma_y}}-1\right)\Bigr] \, e^{2\sigma_{\underline R}}\,\nu^{\omega}_{\tilde{\Lambda}_k,\beta}(\diff\sigma^{\underline R})\;.
\ee

Putting all of the above together, we conclude that
\be
\card{U''_2(t)}\;\le\; \card{U''}_2 \; :=\;
\sum_{n=1}^{\infty}\frac{1}{n!}\sum_{(R_1,\ldots, R_n)\in \mathcal{R}_2^n} \card{\omega^T_n(R_1,\ldots, R_n)}\,\left[\prod_{i=1}^n \bigl|\widetilde \Gamma\bigr|(R_i)\right]
\ee
with
\be
\bigl|\widetilde \Gamma\bigr|(R)\;=\; 
\int_{\mathbb R^{|\underline{R}|}}\prod_{\{x,y\}\in R}\left(e^{\beta \card{J_{xy}\sigma_x\sigma_y}}-1\right)\prod_{x\in\underline{R}} e^{2\sigma_x}\,\nu^{\omega}_{\tilde{\Lambda}_k,\beta}(\diff\sigma_x)\;.
\ee
\smallskip

At this point we can repeat verbatim the proof of Proposition \ref{prop:cluster} changing $\card \Gamma(R) \to \bigl|\widetilde\Gamma\bigr|(R)$ and
\begin{eqnarray*}
\Theta(s) & \to & \widetilde\Theta(s) := \int_{\mathbb R}|\sigma_{x_i}|^{s}e^{2\widehat{\beta}_{r_0}\sigma_{x_i}^2+2\card{\sigma_x}}
 \,\nu^{\omega}_{\tilde{\Lambda}_k,\beta}(\diff\sigma_{x_i})\\ 
G & \to & \widetilde G:= \int_{\mathbb R}\exp\Bigl(3|\sigma_{x}|+\frac{A}{4}\sigma_{x}^2\Bigr)\,\pi^{\omega}_+(\diff\sigma_{x})\;,
\end{eqnarray*}
concluding the proof of (i).  The proof of (ii) is analogous, replacing in all bounds $\pi^\omega_\pm$ by $\pi^\infty_\pm$.
\end{proof}

\section{Proof of Theorem \ref{main}. (D) Bound for the medium-frequency integral}

The bound relies on the following lemma.
\begin{lemma}\label{lem:crucial}
For each $0<\delta<T$ there exists $\eta(\delta,T)>0$ such that 
\be\label{eq:crucial}
    \left|\nu^{\omega}_{\tilde{\Lambda}_k,\beta}\left(\exp\left(\frac{it\sigma_x}{\sqrt{D_k}}\right)\right)\right|\le e^{-\eta}.
\ee
for all $t$ with $\delta\sqrt{D_k}\le |t|\le T\sqrt{D_k}$.  The value of $\eta$ depends, in general, both of $\beta$ and $\omega$, but in the finite-spin case can be chosen independent of $\omega$.
\end{lemma}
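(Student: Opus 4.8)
The plan is to read the quantity in \eqref{eq:crucial} as the characteristic function $\widehat\mu_k(s):=\int_{E}e^{is\sigma}\,\mu_k(\diff\sigma)$ of the single-spin probability measure $\mu_k:=\nu^{\omega}_{\tilde\Lambda_k,\beta}$, evaluated at the rescaled frequency $s=t/\sqrt{D_k}$, which by hypothesis satisfies $\delta\le|s|\le T$; since $|\widehat\mu_k(-s)|=|\widehat\mu_k(s)|$ it suffices to treat $s\in[\delta,T]$. The point is that the characteristic function of a measure with a strictly positive density (with respect to the a-priori Lebesgue measure) is bounded away from $1$ on compact sets avoiding the origin, and the delicate issue — uniformity of this bound in $k$ (and, in the bounded case, in $\omega$) — is resolved precisely by the $\Lambda$- and $k$-independent bounds \eqref{pi}, which let us replace $\mu_k$ by the fixed auxiliary measure $\pi^{\omega}_-$.

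Concretely, I would start from $|\widehat\mu_k(s)|^2=\iint\cos\!\bigl(s(\sigma-\sigma')\bigr)\,\mu_k(\diff\sigma)\,\mu_k(\diff\sigma')$ (true because the left-hand side is real), so that, since $\mu_k$ is a probability measure and $1-\cos\theta=2\sin^2(\theta/2)\ge 0$,
\[
1-\bigl|\widehat\mu_k(s)\bigr|^2 \;=\; 2\iint \sin^2\!\Bigl(\tfrac{s(\sigma-\sigma')}{2}\Bigr)\,\mu_k(\diff\sigma)\,\mu_k(\diff\sigma')\;.
\]
By \eqref{pi} the difference $\mu_k-\pi^{\omega}_-$ is a non-negative measure, hence $\mu_k\otimes\mu_k-\pi^{\omega}_-\otimes\pi^{\omega}_-$ is non-negative on $E\times E$; discarding this positive excess (legitimate since the integrand $\sin^2(\cdot)$ is non-negative) gives the $k$-, $\tilde\Lambda_k$- and site-independent lower bound
\[
1-\bigl|\widehat\mu_k(s)\bigr|^2 \;\ge\; \Psi(s)\;:=\;2\iint \sin^2\!\Bigl(\tfrac{s(\sigma-\sigma')}{2}\Bigr)\,\pi^{\omega}_-(\diff\sigma)\,\pi^{\omega}_-(\diff\sigma')\;.
\]

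It then remains to show that $\Psi$ is continuous and strictly positive on $(0,\infty)$, after which $\eta:=-\tfrac12\log\bigl(1-\min_{\delta\le s\le T}\Psi(s)\bigr)$ does the job. Continuity follows from dominated convergence: $\pi^{\omega}_-$ is a \emph{finite} measure — its normalisation $\int_{\mathbb R}e^{-F(s)+\vartheta(\omega)\beta|s|}\,\diff s$ is finite because superstability (Definition \ref{superstability}) forces $F(s)\ge As^2-c$ — and $|\sin^2(\cdot)|\le 1$. Strict positivity is the one spot that genuinely uses the absolute continuity of the a-priori measure: $\pi^{\omega}_-$ has an everywhere-positive density on $E$, so $\pi^{\omega}_-\otimes\pi^{\omega}_-$ is equivalent to Lebesgue measure on $E\times E$, while for a fixed $s\ne0$ the zero set $\{(\sigma,\sigma'): s(\sigma-\sigma')/2\in\pi\mathbb Z\}$ is a Lebesgue-null union of lines; hence $\Psi(s)>0$. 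For the bounded-spin case one runs the same argument with $\pi^{\infty}_-$ in place of $\pi^{\omega}_-$, using \eqref{pi-bounded}; since $\pi^{\infty}_-$ does not involve $\omega$, the resulting $\eta$ is $\omega$-independent, as claimed. (This is also why the lemma is stated for continuous spins only: for a purely atomic a-priori measure $\Psi$ would vanish at frequencies commensurate with the lattice spacing, and the corresponding range is instead treated by the methods mentioned earlier for the discrete case.) The main obstacle is thus conceptual rather than computational — arranging the estimate so that it is manifestly uniform in $k$ — and it is handled by the passage to $\pi^{\omega}_-$ (resp.\ $\pi^{\infty}_-$) together with the compactness of $[\delta,T]$.
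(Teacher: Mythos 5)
Your proof is correct and follows the paper's argument through the first three steps: duplicated variables giving $1-|\widehat\mu_k(s)|^2=2\iint\sin^2\bigl(\tfrac{s(\sigma-\sigma')}{2}\bigr)$, then replacement of the $k$-dependent single-site measure by the uniform lower bound $\pi^\omega_-$ from \eqref{pi} (resp.\ $\pi^\infty_-$ from \eqref{pi-bounded}). You differ only in the finish. The paper restricts the integral to the explicit strip $J_T=\{2/T\le|\sigma_x-\sigma_y|\le\pi/T\}$, on which the argument of $\sin^2$ lies in $[\delta/T,\pi/2]$, and uses monotonicity of $\sin^2$ there to pull out $\sin^2(\delta/T)$, arriving at the explicit choice $e^{-2\eta}=1-2\sin^2(\delta/T)\,(\pi^\omega_-\otimes\pi^\omega_-)(J_T)$. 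You instead argue abstractly that $\Psi(s)$ is continuous (dominated convergence, using finiteness of $\pi^\omega_-$) and strictly positive for $s\ne 0$ (the zero set of $\sin^2$ is a Lebesgue-null union of lines and $\pi^\omega_-\otimes\pi^\omega_-$ has a strictly positive Lebesgue density), so $\min_{[\delta,T]}\Psi>0$ by compactness. The paper's route is constructive and yields a formula for $\eta$, but tacitly relies on $(\pi^\omega_-\otimes\pi^\omega_-)(J_T)>0$; your compactness argument is softer and less explicit but makes the required positivity transparent. Your observation that the step genuinely uses the absolute continuity of the a-priori single-spin measure is accurate and worth noting.
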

Indeed, this condition is enough because, by \eqref{eq:diff-fin} and Proposition
\ref{prop:cluster}
\be
\left|\mu^{\omega}_{\Lambda_k,\beta}(e^{it\bar{S}_k})\right| \;\le\; \exp\bigl[-|\tilde \Lambda_k| (\eta-2\epsilon)\bigr]
\ee
with $\eta-2\epsilon>0$ if $r_0$ is large enough (finite spins) or $\beta$ small enough (unbounded spins).  Therefore, as $D_k=O(|\Lambda_k|)$ due to conditions (i)-(ii) of the local CLT (Definition \ref{def:lclt}),
\be\label{eq:soft.limit}
\lim_{k\to \infty}  \int_{\delta\sqrt{D_k}}^{T\sqrt{D_k}} \left|\mu^{\omega}_{\Lambda_k,\beta}(e^{it\bar{S}_k})\right| \diff t
\;\le\; \lim_{k\to \infty} (T-\delta) \sqrt{D_k} \, e^{-|\tilde \Lambda_k| (\eta-2\epsilon)}\;=\;0\;.
\ee
In the finite-spin case, this convergence is uniform in $\omega$, a fact that justifies the use of dilution.

\begin{proof}
We resort to duplicated variables
    \begin{align*}
        &\left|\nu^{\omega}_{\tilde{\Lambda}_k,\beta}\left(\exp\left(\frac{it\sigma_x}{\sqrt{D_k}}\right)\right)\right|^2 
        =\left(\int_{\mathbb R}\cos\left(\frac{t\sigma_x}{\sqrt{D_k}}\right)\nu_{\tilde{\Lambda}_k,\beta}^{\omega}(\diff\sigma_x)\right)^2+\left(\int_{\mathbb R}\sin\left(\frac{t\sigma_x}{\sqrt{D_k}}\right)\nu_{\tilde{\Lambda}_k,\beta}^{\omega}(\diff\sigma_x)\right)^2\\ 
        &=\int_{\mathbb R}\int_{\mathbb R}\cos\left(\frac{t\sigma_x}{\sqrt{D_k}}\right)\cos\left(\frac{t\sigma_y}{\sqrt{D_k}}\right)\nu_{\tilde{\Lambda}_k,\beta}^{\omega}(\diff\sigma_x)\nu_{\tilde{\Lambda}_k,\beta}^{\omega}(\diff\sigma_y) 
        +\int_{\mathbb R}\int_{\mathbb R}\sin\left(\frac{t\sigma_x}{\sqrt{D_k}}\right)\sin\left(\frac{t\sigma_y}{\sqrt{D_k}}\right)\nu_{\tilde{\Lambda}_k,\beta}^{\omega}(\diff\sigma_x)\nu_{\tilde{\Lambda}_k,\beta}^{\omega}(\diff\sigma_y)\\ 
        &=1-2\int_{\mathbb R}\int_{\mathbb R}\sin^2\left(\frac{t}{2\sqrt{D_k}}(\sigma_x-\sigma_y)\right)\nu_{\tilde{\Lambda}_k,\beta}^{\omega}(\diff\sigma_x)\nu_{\tilde{\Lambda}_k,\beta}^{\omega}(\diff\sigma_y)\\
        &\le1-2\int_{\mathbb R}\int_{\mathbb R}\sin^2\left(\frac{t}{2\sqrt{D_k}}(\sigma_x-\sigma_y)\right)\pi_-^{\omega}(\diff\sigma_x)\pi_-^{\omega}(\diff\sigma_y)
    \end{align*}
To obtain an upper bound we restrict the last integral to the set
\[
\mathrm{J}_T:=\left\{(\sigma_x,\sigma_y)\in\mathbb R\times\mathbb R\bigg|\frac{2}{T}\le |\sigma_x-\sigma_y|\le  \frac{\pi}{T}\right\}\;.
\]
Note that if $\delta\sqrt{D_k}\le|t|\le T\sqrt{D_k}$, the argument of the integrand satisfies 
\be
\card{\frac{t}{2\sqrt{D_k}}(\sigma_x-\sigma_y)}\in \Big[\frac{\delta}{T}\,, \frac{\pi}{2}\Bigr]\subset
\Bigl[0, \frac{\pi}{2}\Bigr]\;.
\ee
Within this interval, the function $\sin^2$ takes its minimum at the left endpoint.  Thus,
\be
\left|\nu^{\omega}_{\tilde{\Lambda}_k,\beta}\left(\exp\left(\frac{it\sigma_x}{\sqrt{D_k}}\right)\right)\right|^2\;\le\;
1-2\, \sin^2\left(\frac{\delta}{T}\right) \iint_{\mathrm{J}_T}\pi_-^{\omega}(\diff\sigma_x)\pi_-^{\omega}(\diff\sigma_y)\;=:\; e^{-2\eta}\;.
\ee
For bounded spins, an $\omega$-independent $\eta$ is found by replacing $\pi^\omega_\pm$ by $\pi^\infty_\pm$.
\end{proof}

\section{Proof of Theorem \ref{main}. (E) Bound for the hard-frequency integrals}\label{sec:hard}

Once again, the proof relies non a suitable bound for the ``finite-range'' part.  A simple integration by parts yield the following key estimation.

\begin{lemma}\label{lem:hard}

There exists a constant $\gamma(\beta,\omega)$ such that
\be\label{eq:hard-key}
\left|\nu^{\omega}_{\tilde{\Lambda}_k,\beta}\left(\exp\left(\frac{it\sigma_x}{\sqrt{D_k}}\right)\right)\right|\;\le\;
\frac{\gamma\sqrt{D_k}}{\card t}\;.
\ee
For bounded spins, the constant $\gamma$ can be chosen depending only on $\beta$.
\end{lemma}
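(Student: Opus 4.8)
The left-hand side of \eqref{eq:hard-key} is the characteristic function of the single-spin measure $\nu^{\omega}_{\tilde{\Lambda}_k,\beta}$ evaluated at the rescaled frequency $s:=t/\sqrt{D_k}$. Write $g$ for its Lebesgue density, i.e.\ $g(\sigma)=\nu^{\omega}_{\tilde{\Lambda}_k,\beta}(\sigma)$ in the notation of \eqref{nu}, suppressing the dependence on $k,\beta,\omega$; thus $g\propto e^{-F^{\omega}_{\tilde{\Lambda}_k,\beta}}$ with $F^{\omega}_{\tilde{\Lambda}_k,\beta}(\sigma)=F(\sigma)-\beta\sigma\sum_{y\in\tilde{\Lambda}_k^{c}}J_{xy}\omega_y$. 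It suffices to prove
\[
\left|\int_{\mathbb R}e^{is\sigma}\,g(\sigma)\,\diff\sigma\right|\;\le\;\frac{\gamma(\beta,\omega)}{\card s}
\]
for a constant $\gamma$ \emph{independent of} $\tilde{\Lambda}_k$, since \eqref{eq:hard-key} then follows from $1/\card s=\sqrt{D_k}/\card t$. The plan is the textbook one: the Fourier transform of a function with an integrable derivative decays like the reciprocal of the frequency, so the whole content of the lemma is a volume-uniform bound on $\|g'\|_{L^1}$.

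First I would record the regularity of $g$. Since $F$ is continuous and $C^{1}$ away from finitely many points $p_1<\dots<p_m$, the density $g$ is continuous on $\mathbb R$ and $C^{1}$ on each complementary interval, with a.e.\ derivative $g'=-(F^{\omega}_{\tilde{\Lambda}_k,\beta})'\,g$. By Superstability (Definition~\ref{superstability}), $F(\sigma)\ge A\sigma^2-c$, so $g$ decays faster than a Gaussian; in particular $g(\sigma)\to 0$ as $\card\sigma\to\infty$. Splitting $\int_{\mathbb R}=\sum_{j}\int_{p_j}^{p_{j+1}}$ and integrating by parts on each piece, the boundary contributions at the interior points $p_j$ cancel telescopically (continuity of $g$) and those at $\pm\infty$ vanish, yielding
\[
\int_{\mathbb R}e^{is\sigma}g(\sigma)\diff\sigma\;=\;-\frac{1}{is}\int_{\mathbb R}e^{is\sigma}g'(\sigma)\diff\sigma\;,
\qquad\text{hence}\qquad
\left|\int_{\mathbb R}e^{is\sigma}g(\sigma)\diff\sigma\right|\;\le\;\frac{1}{\card s}\int_{\mathbb R}\card{g'(\sigma)}\diff\sigma\;.
\]

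It remains to bound $\int_{\mathbb R}\card{g'(\sigma)}\diff\sigma=\int_{\mathbb R}\card{(F^{\omega}_{\tilde{\Lambda}_k,\beta})'(\sigma)}\,g(\sigma)\,\diff\sigma$ independently of $k$. From $(F^{\omega}_{\tilde{\Lambda}_k,\beta})'(\sigma)=F'(\sigma)-\beta\sum_{y\in\tilde{\Lambda}_k^{c}}J_{xy}\omega_y$ together with the strong-temperedness bound \eqref{eq:temp}, one gets $\card{(F^{\omega}_{\tilde{\Lambda}_k,\beta})'(\sigma)}\le\card{F'(\sigma)}+\beta\vartheta(\omega)$, while \eqref{pi} gives $g\le\pi^{\omega}_+$. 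Hence
\[
\int_{\mathbb R}\card{g'(\sigma)}\diff\sigma\;\le\;\int_{\mathbb R}\bigl(\card{F'(\sigma)}+\beta\vartheta(\omega)\bigr)\,\pi^{\omega}_+(\diff\sigma)\;=:\;\gamma(\beta,\omega)\;,
\]
which is finite because $\pi^{\omega}_+$ has Gaussian decay (again Superstability) and $\card{F'}$ grows sub-Gaussianly at infinity; crucially, the right-hand side does not involve $\tilde{\Lambda}_k$. This proves the displayed bound, and with it \eqref{eq:hard-key}.

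For bounded spins the argument is identical after replacing $\vartheta(\omega)$ by its $\omega$-independent upper bound $\vartheta$ and $\pi^{\omega}_+$ by $\pi^{\infty}_+$ (using \eqref{pi-bounded}), which makes $\gamma$ depend only on $\beta$; the only extra bookkeeping is that the integration by parts over $E\subset[-R,R]$ may leave boundary terms at the endpoints (the density need not vanish there), but these are themselves $O(1/\card s)$ with an $\omega$-uniform constant and are absorbed into $\gamma$. The one step that is not purely routine is precisely the last one --- making the bound on $\|g'\|_{L^1}$ uniform in the volume $\tilde{\Lambda}_k$ (and, for bounded spins, in the boundary condition); this is exactly what the strongly-tempered hypothesis of Definition~\ref{tempered} and the volume-independent sandwich \eqref{pi} (resp.\ \eqref{pi-bounded}) were set up to deliver, while the remaining ingredients, the decay of $g$ at infinity and the integrability of $\card{F'}$ against a Gaussian weight, are immediate from Superstability.
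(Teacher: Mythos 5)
Your proof is correct and follows the same route as the paper: a single integration by parts to extract a factor $\sqrt{D_k}/|t|$, followed by a volume-uniform bound on the remaining integral using strong-temperedness (to replace $\beta\sum_{y\in\tilde\Lambda_k^c}J_{xy}\omega_y$ by $\beta\vartheta(\omega)$) and the sandwich $\nu^\omega_{\tilde\Lambda_k,\beta}\le\pi^\omega_+$. You are in fact slightly more careful than the paper at two points it leaves implicit --- the telescoping of the interior boundary terms at the finitely many non-smoothness points of $F$, and the endpoint contributions when $E\subset[-R,R]$ --- and you silently correct the paper's typo $\vartheta(x)$ to the intended $\beta\vartheta(\omega)$.
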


The desired bound follows immediately because, then, by \eqref{eq:diff-fin} and Proposition
\ref{prop:cluster},
\be
\left|\mu^{\omega}_{\Lambda_k,\beta}(e^{it\bar{S}_k})\right| \;\le\; \biggl(\frac{\gamma\sqrt{D_k}e^{2\epsilon}}{\card t}\biggr)^{\card{\tilde\Lambda_k}}
\ee
and
\be
\int_{T\sqrt{D_k}}^\infty \left|\mu^{\omega}_{\Lambda_k,\beta}(e^{it\bar{S}_k})\right| \diff t\;\le\; \frac{\gamma\sqrt{D_k}e^{2\epsilon}}{ \bigl|\tilde\Lambda_k\bigr|-1} 
\Bigl(\frac{\gamma\,e^{2\epsilon}}{T}\Bigr)^{\card{\tilde\Lambda_k}-1}
\ee
which converges to zero as $k\to\infty$, as long as $T>\gamma e^{2\epsilon}$.  In the finite-spin case, the $\omega$-uniformity of $\gamma$ justifies the use of dilution.

\begin{proof}
Integrating by parts the numerator and resorting to the superstability condition \eqref{superstability},

\begin{eqnarray}
\lefteqn{\nu_{\tilde{\Lambda}_k,\beta}^{\omega}\left(\exp\left(\frac{it\sigma_x}{\sqrt{D_k}}\right)\right)}\nonumber\\[8pt]
 &=& \frac{\displaystyle\int_{\mathbb R}
\exp\biggl(-F(\sigma_x)+\sigma_x\beta\sum_{y\in\tilde\Lambda_k^c} J_{xy}\omega_y+\frac{it\sigma_x}{\sqrt{D_k}}\biggr)\diff\sigma_x}{\displaystyle\int_{\mathbb R} \exp\biggl(-F(\sigma_x)+\sigma_x\beta\sum_{y\in\tilde\Lambda_k^c} J_{xy}\omega_y\biggr)\diff\sigma_x}\nonumber\\
&=&\frac{\sqrt{D_k}}{it}\frac{\displaystyle\int_{\mathbb R}
\Bigl(-F'(\sigma_x)+\beta\sum_{y\in\tilde\Lambda_k^c} J_{xy}\omega_y\Bigr)
\exp\biggl(-F(\sigma_x)+\sigma_x\beta\sum_{y\in\tilde\Lambda_k^c} J_{xy}\omega_y+\frac{it\sigma_x}{\sqrt{D_k}}\biggr)\diff\sigma_x}{\displaystyle\int_{\mathbb R} \exp\biggl(-F(\sigma_x)+\sigma_x\beta\sum_{y\in\tilde\Lambda_k^c} J_{xy}\omega_y\biggr)\diff\sigma_x}\
\end{eqnarray}
Hence
\be
\left|\nu_{\tilde{\Lambda}_k,\beta}^{\omega}\left(\exp\left(\frac{it\sigma_x}{\sqrt{D_k}}\right)\right)\right|\;\le\; \frac{\sqrt{D_k}}{\card t} \int_{\mathbb R} \Bigl(\card{F'(\sigma_x)} + \vartheta(x)\Bigr) \,\pi_+^\omega (\diff t)
\;=:\; \frac{\sqrt{D_k}}{\card t}\,\gamma\;.
\ee
For finite spins, $\gamma$ can be defined uniformly in $\omega$ by replacing $\pi^\omega_\pm$ by $\pi^\infty_\pm$.
\end{proof}

\section*{Acknowledgements}
The authors would like to acknowledge the support of the NYU-ECNU Institute of Mathematical Sciences at NYU Shanghai. E.O.E. thanks Research Institute for Mathematical Sciences, an International Joint Usage/Research Center located in Kyoto  University, for the support during NYU Shanghai-Kyoto-Waseda Young Probabilists' Meeting.

\appendix

\section{Some basic results on cluster expansion method}\label{sec:cluster-exp}
This section provides a summary of the cluster expansion theory for readers unfamiliar with this topic. The topic can be found in numerous references such as \cite{RN24,FP, FV}.
\subsection{Cluster expansion}\label{cluster_expansion_subsec}
In order to use the cluster expansion method, let us give the definition of \textit{polymers}, \textit{activities} and the compatible relation between the polymers.  

\begin{definition}\label{defi.1}
    A set $A\subset \Lambda$ is \empbf{connected} (for the \emph{in}compatibility relation) if it is non-empty and can not be partitioned into two mutually compatible subsets.
\end{definition}

\begin{definition}\label{defi.2}
    Finite connected sets are called \empbf{polymers}.  The set of polymers will be denoted $\mathcal{P}$.
\end{definition}

\begin{definition} 
\begin{itemize}
\item[(i)] The \empbf{support} of a family $\boldsymbol{B}=\{X_i\}_i$ of finite subsets of $\mathcal P^{1,2}$ is the set
$\underline{\boldsymbol{B}}=\cup_i X_i$.
\item[(ii)] Two finite families $\boldsymbol{B}_1,\boldsymbol{B}_2$ of finite subsets of $\Lambda$ are \emph{compatible} if their supports are disjoints i.e. $\underline{\boldsymbol{B}}_1 \cap \underline {\boldsymbol{B}}_2 = \emptyset$.
\item[(iii)] A family $\boldsymbol{B}$ is said \emph{connected} if it is not the union of two mutually compatible families.  
\end{itemize}
\end{definition}

Using the Mayer method to the partition function and the definition of the polymers in the Definition \ref{defi.1} and Definition \ref{defi.2}, we can write the partition function as the following form
\begin{equation}\label{defin.3}
    Z\;=\;1+\sum_{n\ge 1}\sum_{(R_1,\ldots, R_n)\in\mathcal{P}^n}\prod_{1\le i<j\le n}\mathbf{1}_{\underline{\boldsymbol{B}}_1\cap\underline{\boldsymbol{B}}_2=\emptyset}\prod_{i=1}^n\zeta(R)
\end{equation}
where $\zeta(R)$ is the activity of the polymer $R$.

\begin{definition} Consider a set $\mathcal P$ endowed with a hard-core compatibility relation ``$\sim$".  
\begin{itemize}
\item[(i)] The \empbf{incompatibility graph} of a sequence $(B_1,\ldots, B_n)\subset \mathcal P^{1,2}$ 
is the graph $\mathbb{G}(B_1,\ldots,B_n)$ with vertex set $\{1,\ldots,n\}$ and edge set 
\[\left\{\{i,j\}:B_i\nsim B_j,0\le i<j\le n\right\}.\]
\item[(ii)] The sequence $(B_1,\ldots, B_n)$ is a \empbf{cluster} if it is connected or, equivalently, if $\mathbb{G}(B_1,\ldots,B_n)$ is a connected graph.
\end{itemize}
\end{definition}

\begin{definition} The \empbf{cluster expansion} corresponding to the gas expansion \eqref{defin.3} is the formal power series in the activities $\zeta(R)$ obtained by taking the logarithm ---in the sense of composition of formal power series--- of the power series \eqref{defin.3}.
\end{definition}

\begin{theorem}\label{theo:cl1} The cluster expansion for the general gas expansion \eqref{defin.3} is
\begin{equation}\label{eq:polmect10.-1}
\log Z\;=\;\sum_{n=1}^{\infty}\frac{1}{n!}\sum_{\substack{(R_1,\ldots,R_n)\\ R_i\subset \mathcal P^{1,2}\\ B_i \;{\rm connected}}}\omega_n^T(R_1,\ldots,R_n)\,\zeta(R_1)\cdots \zeta(R_{n})
\end{equation}
with
\begin{equation}\label{eq:polmect12}
\omega_n^T(B_1,\ldots,B_n)\;=\;\left\{\begin{array}{cl}
1 & \text{if }\, n=1\\ [6pt]
\sum\limits_{\substack{G\subset\mathbb{G}(B_1,\ldots,B_n)\\G\, \mathrm{conn. spann.}}} (-1)^{|E(G)|} & \text{if }\,n>1 \text{ and }\mathbb{G}(B_1,\ldots,B_n)\text{ connected} \\ [6pt]
0 & \text{if } \mathbb{G}(B_1,\ldots,B_n)\, \text{ not connected}
\end{array}\right.,
\end{equation}
where $G$ ranges over all connected spanning subgraphs of $\mathbb{G}(B_1,\ldots,B_n)$. 
\end{theorem}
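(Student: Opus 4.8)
The plan is to prove Theorem \ref{theo:cl1} as a formal power-series identity in the activities $\{\zeta(R)\}_{R\in\mathcal P}$; no analytic estimates enter and the whole argument is combinatorial. First I would put the hard-core weight into Mayer form: for polymers $R,R'$ set $f(R,R'):=-\mathbf{1}_{\underline{R}\cap\underline{R'}\neq\emptyset}$, so that $\mathbf{1}_{\underline{R}\cap\underline{R'}=\emptyset}=1+f(R,R')$. Symmetrising the sum in \eqref{defin.3} over ordered tuples (legitimate since the summand is symmetric in $R_1,\ldots,R_n$, and producing the factor $1/n!$) and expanding the product $\prod_{1\le i<j\le n}(1+f(R_i,R_j))$ into a sum over all graphs $g$ on the vertex set $[n]=\{1,\ldots,n\}$, one arrives at
\[
Z\;=\;\sum_{n\ge 0}\frac{1}{n!}\sum_{(R_1,\ldots,R_n)\in\mathcal P^n}\Bigl(\sum_{g\text{ on }[n]}\ \prod_{\{i,j\}\in E(g)}f(R_i,R_j)\Bigr)\prod_{i=1}^n\zeta(R_i).
\]

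The core step is to factor the inner sum over connected components. Every graph on $[n]$ is the disjoint union of connected graphs supported on the blocks of the partition it induces; since both the $f$-weight and the activity product factor over these blocks, setting $\phi^T_I\bigl((R_i)_{i\in I}\bigr):=\sum_{g\ \mathrm{conn.\ spanning\ on}\ I}\prod_{\{i,j\}\in E(g)}f(R_i,R_j)$ for a finite index set $I$ gives $\sum_{g\text{ on }[n]}\prod f=\sum_{\pi\vdash[n]}\prod_{V\in\pi}\phi^T_V$. Substituting, organising the sum over set partitions as a sum over the number $m$ of blocks and over \emph{ordered} partitions (a factor $m!$), and using $\frac{1}{n!}\binom{n}{k_1,\ldots,k_m}=\frac{1}{k_1!\cdots k_m!}$ for the multinomial count of ordered partitions of $[n]$ with block sizes $k_1,\ldots,k_m$, the double series collapses to
\[
Z\;=\;\sum_{m\ge 0}\frac{1}{m!}\Bigl(\sum_{k\ge 1}\frac{1}{k!}\sum_{(R_1,\ldots,R_k)\in\mathcal P^k}\phi^T_{[k]}\bigl((R_i)\bigr)\prod_{i=1}^k\zeta(R_i)\Bigr)^{m}\;=\;\exp\Bigl(\sum_{k\ge 1}\frac{1}{k!}\sum_{(R_1,\ldots,R_k)}\phi^T_{[k]}\bigl((R_i)\bigr)\prod_{i=1}^k\zeta(R_i)\Bigr).
\]
Taking the formal logarithm yields \eqref{eq:polmect10.-1} with $\omega_n^T(R_1,\ldots,R_n)=\phi^T_{[n]}\bigl((R_i)\bigr)$. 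This passage --- from ``sum over all graphs, grouped by their connected components'' to the exponential of the ``connected'' series --- is the combinatorial heart of the statement and the place where care is most needed: it is precisely the exp/log relation for exponential generating functions, and the bookkeeping of the set-partition sum with its multinomial coefficients is the only nontrivial point.

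It remains to identify $\phi^T_{[n]}$ with the weight in \eqref{eq:polmect12}. Since $f(R_i,R_j)=-\mathbf{1}_{\underline{R}_i\cap\underline{R}_j\neq\emptyset}$, we have $\prod_{\{i,j\}\in E(g)}f(R_i,R_j)=(-1)^{|E(g)|}\prod_{\{i,j\}\in E(g)}\mathbf{1}_{\underline{R}_i\cap\underline{R}_j\neq\emptyset}$, which vanishes unless every edge of $g$ joins an incompatible pair, i.e.\ unless $g$ is a subgraph of the incompatibility graph $\mathbb G(R_1,\ldots,R_n)$. Hence $\phi^T_{[n]}\bigl((R_i)\bigr)=\sum_{g\subset\mathbb G(R_1,\ldots,R_n),\ g\ \mathrm{conn.\ spann.}}(-1)^{|E(g)|}$: this equals $1$ when $n=1$ (the single-vertex graph, $|E(g)|=0$), equals the stated sum when $\mathbb G(R_1,\ldots,R_n)$ is connected, and is $0$ when $\mathbb G(R_1,\ldots,R_n)$ is disconnected, since it then has no connected spanning subgraph. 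In the last case the term drops out, so restricting the outer sum in \eqref{eq:polmect10.-1} to sequences $(R_1,\ldots,R_n)$ with $\mathbb G(R_1,\ldots,R_n)$ connected (the clusters) changes nothing. This matches \eqref{eq:polmect12} exactly and completes the proof. An alternative route is induction on $n$, comparing coefficients of a fixed monomial in the activities on both sides; or one may simply invoke the standard treatments in \cite{FP,FV}. The argument above is the self-contained one.
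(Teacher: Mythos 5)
The paper does not give a proof of Theorem~\ref{theo:cl1}: the appendix states it as background material and simply cites \cite{RN24,FP,FV}. So there is no ``paper's proof'' to compare against; what you have supplied is a self-contained argument where the paper deferred to the literature. Your argument is the standard Mayer-expansion route and it is correct: you pass from the hard-core indicator to Mayer bonds $f(R,R')=-\mathbf 1_{\underline R\cap\underline R'\neq\emptyset}$, expand $\prod_{i<j}(1+f(R_i,R_j))$ over graphs on $[n]$, factor the graph sum over connected components, use the exponential formula (the moment--cumulant / ordered-set-partition bookkeeping with the multinomial identity), and finally observe that only subgraphs of the incompatibility graph $\mathbb G(R_1,\ldots,R_n)$ survive, which gives precisely $\omega_n^T$ of \eqref{eq:polmect12} including the conventions for $n=1$ and the vanishing when $\mathbb G$ is disconnected. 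The identification $\phi^T_{[n]}=\omega^T_n$ and the remark that restricting the outer sum to clusters is harmless are both right.

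One small point worth flagging explicitly rather than folding into a parenthetical: the paper's gas expansion \eqref{defin.3} is written as a sum over ordered tuples $(R_1,\ldots,R_n)\in\mathcal P^n$ \emph{without} a $1/n!$, yet the cluster expansion \eqref{eq:polmect10.-1} carries the $1/n!$. For the theorem to hold as stated, \eqref{defin.3} must either be understood as a sum over unordered compatible families or carry a $1/n!$ (the two are equivalent here, since compatible polymers are pairwise distinct so each unordered family gives exactly $n!$ ordered tuples). Your ``symmetrising $\ldots$ producing the factor $1/n!$'' step is the repair of this convention; it would be cleaner to state it as the assumption you are making about \eqref{defin.3}, since as literally typeset in the paper that equation has no $1/n!$ to symmetrise into existence.
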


\subsection{Convergence criteria of cluster expansion}\label{conv.con}
Convergence is studied at the level of the \emph{pinned cluster expansion}
\begin{equation}\label{eq:polmect10.1.1}
\Pi(\bbd{\zeta})(B_0)\;=\;\sum_{n=0}^{\infty}\frac{1}{n!}\sum_{(B_1,\ldots,B_n)\in\mathcal{P}^n}\omega_n^T(B_0,B_1,\ldots,B_n)\,\zeta(B_1)\cdots \zeta(B_{n})
\end{equation}
which is controlled by the series obtained by termwise bounding by absolute values:
\begin{equation}\label{eq:polmect10.1.2.bis}
\card{\Pi}(\bbd{\lambda})(B_0)\;=\; \sum_{n=0}^{\infty}\frac{1}{n!}\sum_{(B_1,\ldots,B_n)\in\mathcal{P}^n}\card{\omega_n^T(B_0,B_1,\ldots,B_n)}\,\lambda_{\B_1}\cdots \lambda_{\B_{n}}
\end{equation}
with $\lambda_{\B_i}\in\mathbb{R}_+$.  As this last series has positive terms, its convergence amounts to the boundedness of the partial sums; a fact relatively simple to determine without recourse to Banach-space fixed points or other mathematically sophisticated techniques.   The convergence of \eqref{eq:polmect10.1.2.bis} implies the absolute convergence of \eqref{eq:polmect10.1.1} uniformly in the polydisc 
\begin{equation}
\mathcal{D}(\bbd\lambda)\;=\; \bigl\{\card{\zeta(B)}\le \lambda_B:B\in\pro\bigr\}\;.  
\end{equation}
The following theorem states the strongest convergence criterion available for the cluster expansion \eqref{eq:polmect10.1.1}.

\begin{theorem}\label{theo:rr.tan.06}
Consider the function $\boldsymbol{\varphi}:[0,+\infty)^{\mathcal{P}} \longrightarrow [0,+\infty]^{\mathcal{P}}$ defined by
\begin{equation}\label{eq:quan26}
\varphi_{B_0}(\boldsymbol{\mu})\;=\;
1+\sum_{n\ge1}\frac{1}{n!}\sum_{(B_1,\ldots,B_n)\in\pro^n}\mu_{B_1}\cdots \mu_{B_n}
\prod_{i=1}^n\mathbf{1}_{B_0\nsim B_i}\prod_{1\le k< \ell\le n}\mathbf {1}_{B_k\sim B_\ell}\;.
\end{equation}
If $\boldsymbol{\lambda}\in[0,+\infty)^{\mathcal{P}}$ satisfies 
\begin{equation}\label{eq:quan23.1}
\lambda_{B}\;\le\; \frac{\mu_{B}}{\varphi_{B}(\boldsymbol{\mu})}
\end{equation}
for each $B\in\mathcal{P}$, for some $\boldsymbol{\mu}\in[0,+\infty)^{\mathcal{P}}$, then 
the following holds uniformly in the polydisk $\mathcal{D}(\bbd\lambda):=\bigl\{\bbd\rho: \card{\rho_B}\le \lambda_B , B\in \pro\bigr\}$:
\begin{itemize}
\item[(a)] The pinned expansion \eqref{eq:polmect10.1.1} converge absolutely.

\item[(b)] For each $B\in\mathcal{P}$,
\begin{equation}\label{eq:mmm.1}
\card{\rho(B)}\,\card{\Pi(\bbd{\rho}(B))}\;\le\; \mu(B)\;.
\end{equation}
\end{itemize}

\end{theorem}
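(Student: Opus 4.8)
The plan is to reduce the entire statement to a single scalar estimate on the positive-term series \eqref{eq:polmect10.1.2.bis}, namely
\be\label{eq:key-scalar-fp}
\lambda_{B_0}\,\card{\Pi}(\bbd\lambda)(B_0)\;\le\;\mu_{B_0}\qquad\text{for every }B_0\in\pro .
\ee
Once \eqref{eq:key-scalar-fp} is in hand, both conclusions follow for free. Indeed, the summands of $\card{\Pi}(\bbd\lambda)(B_0)$ are non-negative, so \eqref{eq:key-scalar-fp} bounds all its partial sums and hence forces convergence; and for any $\bbd\rho\in\mathcal D(\bbd\lambda)$ the series \eqref{eq:polmect10.1.1} is dominated termwise by $\card{\Pi}(\bbd\lambda)(B_0)$, because $\card{\omega_n^T(B_0,B_1,\ldots,B_n)}\prod_i\card{\rho_{B_i}}\le\card{\omega_n^T(B_0,B_1,\ldots,B_n)}\prod_i\lambda_{B_i}$. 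This gives the absolute convergence in (a), uniform on $\mathcal D(\bbd\lambda)$, and the same domination yields $\card{\rho(B_0)}\,\card{\Pi(\bbd\rho)(B_0)}\le\lambda_{B_0}\card{\Pi}(\bbd\lambda)(B_0)\le\mu_{B_0}$, which is exactly \eqref{eq:mmm.1}. So the whole theorem is reduced to \eqref{eq:key-scalar-fp}.

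To prove \eqref{eq:key-scalar-fp} I would work with the truncated sums $\card{\Pi}^{(N)}(\bbd\lambda)(B_0)$, obtained by cutting the outer sum in \eqref{eq:polmect10.1.2.bis} at $n=N$, and establish $\lambda_{B_0}\card{\Pi}^{(N)}(\bbd\lambda)(B_0)\le\mu_{B_0}$ for every $N$ by induction on $N$; monotone convergence then gives \eqref{eq:key-scalar-fp}. The engine is the Penrose tree-graph identity (a refinement of the Rota tree-graph inequality): for a connected incompatibility graph $\mathbb G(B_0,\ldots,B_n)$ one has
\be
\card{\omega_n^T(B_0,B_1,\ldots,B_n)}\;\le\;\sum_{\tau}\ \prod_{\{i,j\}\in E(\tau)}\mathbf{1}_{B_i\nsim B_j}\ \prod_{\{i,j\}\in P(\tau)}\mathbf{1}_{B_i\sim B_j},
\ee
where $\tau$ runs over the rooted trees on $\{0,1,\ldots,n\}$ produced by the Penrose partition scheme (each non-root vertex attached to a vertex one step closer to the root $0$) and $P(\tau)$ is the accompanying set of \emph{forced-compatible} pairs, which in particular contains every pair of children of a common vertex. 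Inserting this bound into \eqref{eq:polmect10.1.2.bis}, regrouping the admissible trees according to the root-children $C_1,\ldots,C_m$ of $B_0$ and the Penrose subtrees dangling from each $C_i$, and using that these root-children are pairwise compatible and each incompatible with $B_0$, I expect to arrive at the recursive inequality
\be
\card{\Pi}^{(N)}(\bbd\lambda)(B_0)\;\le\;1+\sum_{m\ge1}\frac1{m!}\sum_{\substack{(C_1,\ldots,C_m)\in\pro^m\\ C_i\nsim B_0,\ C_k\sim C_\ell}}\ \prod_{i=1}^m\lambda_{C_i}\,\card{\Pi}^{(N-1)}(\bbd\lambda)(C_i).
\ee

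The induction then closes immediately. For $N=0$ the left-hand side is $\lambda_{B_0}$, which is $\le\mu_{B_0}$ since $\varphi_{B_0}(\bbd\mu)\ge1$ and \eqref{eq:quan23.1} holds. Assuming $\lambda_C\,\card{\Pi}^{(N-1)}(\bbd\lambda)(C)\le\mu_C$ for every $C\in\pro$, substitution into the recursion gives $\card{\Pi}^{(N)}(\bbd\lambda)(B_0)\le 1+\sum_{m\ge1}\frac1{m!}\sum\prod_i\mu_{C_i}=\varphi_{B_0}(\bbd\mu)$ by \eqref{eq:quan26}, whence $\lambda_{B_0}\,\card{\Pi}^{(N)}(\bbd\lambda)(B_0)\le\lambda_{B_0}\,\varphi_{B_0}(\bbd\mu)\le\mu_{B_0}$ by \eqref{eq:quan23.1}. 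Letting $N\to\infty$ proves \eqref{eq:key-scalar-fp}, hence the theorem.

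The hard part will be the passage from the Penrose bound to the recursion with \emph{exactly} the compatibility pattern appearing in \eqref{eq:quan26}, i.e. with root-children $C_1,\ldots,C_m$ that are pairwise compatible and all incompatible with $B_0$. The crude tree-graph inequality (without the $P(\tau)$ factors) produces only an unconstrained sum over children and leads to the weaker Kotecký–Preiss condition; recovering the sharper condition \eqref{eq:quan23.1} genuinely requires the Penrose scheme, together with a careful verification that $P(\tau)$ contains, at every vertex, all sibling pairs, and that the forced-compatibility constraints linking vertices that sit in different root-subtrees below the top generation may be relaxed to $1$ so that the subtrees decouple into the product $\prod_i\card{\Pi}^{(N-1)}(\bbd\lambda)(C_i)$. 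All the interchanges of the multi-indexed sums along the way are harmless, since every summand is non-negative and Tonelli applies, so no analytic subtlety enters there.
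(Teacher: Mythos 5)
The paper itself gives no proof of Theorem \ref{theo:rr.tan.06}; it is the Fern\'andez--Procacci convergence criterion and is simply quoted from reference \cite{FP}. Your argument is precisely the one in \cite{FP}: reduce to the scalar estimate $\lambda_{B_0}\card{\Pi}(\bbd\lambda)(B_0)\le\mu_{B_0}$, bound the truncated positive-term series by induction on the truncation order using the Penrose partition scheme, and exploit the fact that in a Penrose tree the edges to the root are incompatibility edges while the sibling pairs among the root's children lie in $E(R(\tau))\setminus E(\tau)$ and hence carry a forced $\mathbf{1}_{\sim}$ factor, which reproduces exactly the compatibility pattern in \eqref{eq:quan26}. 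Bounding the remaining cross-subtree forced-compatibility factors by $1$ decouples the subtrees into $\prod_i\card{\Pi}^{(N-1)}(\bbd\lambda)(C_i)$, and the induction closes with \eqref{eq:quan23.1}. Your observation that the crude tree-graph inequality (ignoring the $P(\tau)$ factors) only yields Koteck\'y--Preiss is also correct and is the whole point of the Penrose/FP refinement. One small care you rightly flag but do not spell out is the combinatorial bookkeeping in the passage from the sum over ordered $n$-tuples with weight $1/n!$ to the recursion in terms of $m$ root-children with weight $1/m!$ and independent subtree sums; this is a routine exponential-formula computation and does not affect the validity of the argument. In short: the proposal is correct, and it coincides with the proof in the cited source rather than with any proof in the paper, since the paper omits one.
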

Theorem \ref{theo:rr.tan.06} yields the following slightly weaker criterion.

\begin{corollary}[Gruber-Kunz criterion]\label{corollary1.1}
A sufficient criterion for \eqref{eq:quan23.1} is the existence of a real number $a>0$ such that 
\begin{equation}\label{eq:rr.convergence.1}
\sup_{x\in \underline B_0}\sum_{\substack{B\in\pro\\x\in \underline B}}|\zeta(B)| e^{a|B|}\le e^{a}-1
\end{equation}
for each $B_0\in\pro$.
In this case, 
\begin{equation}\label{eq:rr.convergence.1.bis}
\card{\Pi}(\bbd{\zeta})(B)\;\le\; e^{a|B|}
\end{equation}
for each $B\in\pro$.
\end{corollary}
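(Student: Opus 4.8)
The plan is to deduce the corollary directly from Theorem~\ref{theo:rr.tan.06} by exhibiting an explicit comparison sequence $\bbd\mu$. I would take
\[
\lambda_B\;:=\;\card{\zeta(B)}\qquad\text{and}\qquad \mu_B\;:=\;\card{\zeta(B)}\,e^{a|\underline B|}\qquad(B\in\mathcal P),
\]
so that the target inequality \eqref{eq:rr.convergence.1.bis} is exactly $\lambda_B\,\card{\Pi}(\bbd\zeta)(B)\le\mu_B$, which conclusion (b) of Theorem~\ref{theo:rr.tan.06} will deliver as soon as its hypothesis \eqref{eq:quan23.1} is verified. Since $\lambda_{B_0}=\mu_{B_0}\,e^{-a|\underline{B_0}|}$, that hypothesis reduces to the single combinatorial estimate
\[
\varphi_{B_0}(\bbd\mu)\;\le\;e^{a|\underline{B_0}|}\qquad\text{for every }B_0\in\mathcal P .
\]

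To prove this estimate I would unfold the definition \eqref{eq:quan26}. In each term of the sum, the factor $\prod_i\mathbf 1_{B_0\nsim B_i}$ forces every support $\underline{B_i}$ to meet $\underline{B_0}$, while $\prod_{k<\ell}\mathbf 1_{B_k\sim B_\ell}$ forces the supports $\underline{B_1},\dots,\underline{B_n}$ to be pairwise disjoint. Fixing once and for all an arbitrary total order on the sites, associate to each $B_i$ the marked site $x(B_i):=\min(\underline{B_i}\cap\underline{B_0})$; pairwise disjointness makes $i\mapsto x(B_i)$ injective into $\underline{B_0}$. Organizing the $n$-fold sum according to the (necessarily distinct) marks $x(B_1),\dots,x(B_n)$, discarding the disjointness indicators (which only enlarges the sum), and using that the $n!$ orderings of an $n$-subset of $\underline{B_0}$ absorb the factor $1/n!$, one reaches the standard product bound
\[
\varphi_{B_0}(\bbd\mu)\;\le\;\prod_{x\in\underline{B_0}}\Bigl(1+\sum_{\substack{B\in\mathcal P\\ x\in\underline B}}\mu_B\Bigr).
\]
The hypothesis \eqref{eq:rr.convergence.1} says precisely that $\sum_{B\ni x}\mu_B=\sum_{B\ni x}\card{\zeta(B)}e^{a|\underline B|}\le e^{a}-1$, so each factor is at most $e^{a}$ and the product is at most $e^{a|\underline{B_0}|}$, as wanted.

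With \eqref{eq:quan23.1} established, Theorem~\ref{theo:rr.tan.06}(a) gives absolute convergence of the pinned expansion \eqref{eq:polmect10.1.1} uniformly on the polydisk $\mathcal D(\bbd\lambda)$, and Theorem~\ref{theo:rr.tan.06}(b), specialized to $\bbd\rho=\bbd\zeta$ (which lies in $\mathcal D(\bbd\lambda)$ since $\card{\zeta(B)}=\lambda_B$), gives $\card{\zeta(B)}\,\card{\Pi}(\bbd\zeta)(B)\le\mu_B=\card{\zeta(B)}\,e^{a|\underline B|}$; dividing by $\card{\zeta(B)}$ where it is positive yields \eqref{eq:rr.convergence.1.bis}, and the case $\zeta(B)=0$ is settled by a routine approximation, as $\card{\Pi}(\bbd\zeta)(B)$ does not involve $\zeta(B)$ itself. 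I expect the one place genuinely requiring care to be the combinatorial step: one must check that $B_i\mapsto x(B_i)$ is truly injective (this is exactly where the compatibility indicators are spent) and that dropping those indicators only inflates the sum; once this bookkeeping is done, the factorization and the appeal to \eqref{eq:rr.convergence.1} are immediate, and everything else is a substitution into the already quoted abstract criterion.
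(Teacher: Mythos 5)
The paper states this corollary without proof, offering it only as an immediate consequence of Theorem~\ref{theo:rr.tan.06}, so there is no textual proof to compare against. Your derivation is nevertheless the natural one and is essentially correct: taking $\mu_B=\card{\zeta(B)}\,e^{a\card{\underline B}}$ reduces \eqref{eq:quan23.1} to $\varphi_{B_0}(\boldsymbol\mu)\le e^{a\card{\underline{B_0}}}$, and your marking argument for that estimate is the standard one and is argued correctly --- the point that the compatibility indicators $\mathbf 1_{B_k\sim B_\ell}$ make the supports pairwise disjoint so that $i\mapsto\min(\underline{B_i}\cap\underline{B_0})$ is injective into $\underline{B_0}$, after which the $n!$ absorbs the ordering and relaxing the disjointness constraint gives the product $\prod_{x\in\underline{B_0}}\bigl(1+\sum_{B\ni x}\mu_B\bigr)\le e^{a\card{\underline{B_0}}}$, is exactly right.

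Two small points worth tightening. First, you should note (or silently correct) that the $\card B$ in \eqref{eq:rr.convergence.1} and \eqref{eq:rr.convergence.1.bis} is the cardinality of the \emph{support} $\underline B$, as the paper's own use in \eqref{eq:gk1} and the surrounding text make clear; your proof implicitly reads it this way. Second, and more substantively, conclusion~(b) of Theorem~\ref{theo:rr.tan.06} is stated as a bound on $\card{\rho(B)}\,\card{\Pi(\boldsymbol\rho)(B)}$, i.e.\ on the modulus of the sum, whereas the corollary's conclusion is about the \emph{termwise} absolute series $\card{\Pi}(\boldsymbol\zeta)(B)$ from \eqref{eq:polmect10.1.2.bis}. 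Specializing to $\boldsymbol\rho=\boldsymbol\zeta$ therefore gives you a priori only $\card{\Pi(\boldsymbol\zeta)(B)}\le e^{a\card{\underline B}}$, not the stronger statement. The cleanest repair is to use the alternating-sign property of the Ursell coefficients $\omega^T_n$ (sign $(-1)^{n}$ in the pinned expansion with $n$ summed polymers), which gives $\card{\Pi}(\boldsymbol\lambda)(B)=\Pi(-\boldsymbol\lambda)(B)$, and then invoke~(b) with $\rho_{B'}=-\lambda_{B'}$. Even then, when $\lambda_B=0$ the inequality in~(b) is vacuous, so your ``routine approximation'' remark is indeed needed; it would be better to note explicitly that $\card\Pi(\boldsymbol\lambda)(B)$ is monotone increasing in $\boldsymbol\lambda$ and does not involve $\lambda_B$, so one may replace $\lambda_B$ by any $\lambda_B'>0$ satisfying the same hypothesis (e.g.\ shrink $a$ slightly) and pass to the limit. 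These are presentational rather than conceptual gaps --- the structure and the key combinatorial estimate are sound.
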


\subsection{Tree-graph identities}\label{tree_graphs}
The general setup involves a finite graph $\mathbb{G}=(\mathbb{U},\mathbb{E})$, the set $\mathcal{C}_{\mathbb{G}}$ of all connected spanning sub-graphs of $\mathbb{G}$ and the family $\mathcal{T}_{\mathbb{G}}$ of trees belonging to $\mathcal{C}_{\mathbb{G}}$. The set $\mathcal{C}_{\mathbb{G}}$ is partially ordered by bond inclusion:
\begin{equation}\label{eq:part.schem1}
G\le \tilde G\;\Longleftrightarrow\; E(G)\subset E(\tilde G).
\end{equation}
If $G\le \tilde G$, let us denote $[G,\tilde{G}]$ the set of $\widehat{G}\in\mathcal{C}_{\mathbb{G}}$ such that $G\le \widehat{G}\le \tilde G$. 

\begin{definition} A partition scheme of the family $\mathcal{C}_{\mathbb{G}}$ is a map $R:\mathcal{T}_{\mathbb{G}}\longrightarrow\mathcal{C}_{\mathbb{G}}:\tau\mapsto R(\tau)$ such that 
\begin{itemize}
\item[(i)] $E(\tau)\subset E(R(\tau))$, and
\item[(ii)] $\mathcal{C}_{\mathbb{G}}$ is the disjoint union of the sets $[\tau, R(\tau)] $, $\tau\in\mathcal{T}_{\mathbb{G}}$.
\end{itemize}
\end{definition}
The following Proposition is a straightforward consequence of  the definition of partition scheme. Let us recall our previous notation
\[
\omega^T_n(x_1,\ldots, x_n)\;=\;  \sum_{g\in\mathcal C[n]}\prod_{\{i,j\}\in E(g)}\bigl[w(x_i,x_j)-1\bigr].
\]

\begin{proposition}\label{prop:part.sch.1}
For any partition scheme $R$,
\begin{equation}\label{eq:part.sch.1}
\omega^T_n(x_1,\ldots, x_n)\;=\; \sum_{\tau\in\trees[n]}\prod_{\{i,j\}\in E(\tau)}\bigl[w(x_i,x_j)-1\bigr]\prod_{\{i,j\}\in E(R(\tau))\setminus E(\tau)}w(x_i,x_j)\;.
\end{equation}
\end{proposition}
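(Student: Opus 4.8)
The plan is to apply the defining disjointness property of a partition scheme directly to the known graph expansion of the Ursell coefficient. Specialize the general setup to the complete graph $\mathbb{G}=K_n$ on the vertex set $\{1,\dots,n\}$, so that $\mathcal{C}_{\mathbb{G}}=\mathcal{C}[n]$ is the set of connected spanning graphs and $\mathcal{T}_{\mathbb{G}}=\trees[n]$ is the set of spanning trees. Starting from the recalled identity
\[
\omega^T_n(x_1,\ldots, x_n)\;=\;\sum_{g\in\mathcal C[n]}\prod_{\{i,j\}\in E(g)}\bigl[w(x_i,x_j)-1\bigr],
\]
I would invoke property (ii) of the partition scheme, which says that $\mathcal{C}[n]$ is the \emph{disjoint} union of the order intervals $[\tau,R(\tau)]$ over $\tau\in\trees[n]$, and split the sum accordingly:
\[
\omega^T_n(x_1,\ldots, x_n)\;=\;\sum_{\tau\in\trees[n]}\ \sum_{g\in[\tau,R(\tau)]}\ \prod_{\{i,j\}\in E(g)}\bigl[w(x_i,x_j)-1\bigr].
\]

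Next I would evaluate the inner sum for a fixed tree $\tau$. By property (i) we have $E(\tau)\subseteq E(R(\tau))$, and by the definition of the interval $[\tau,R(\tau)]$ a connected spanning graph $g$ belongs to it precisely when $E(g)=E(\tau)\cup S$ for some $S\subseteq E(R(\tau))\setminus E(\tau)$; moreover every such $g$ is automatically connected since it contains the spanning tree $\tau$. Factoring the product over $E(g)$ into the part over $E(\tau)$ and the part over $S$, pulling the $\tau$-factor out of the sum over $S$, and then applying the elementary identity $\sum_{S\subseteq T}\prod_{e\in S}a_e=\prod_{e\in T}(1+a_e)$ with $a_{\{i,j\}}=w(x_i,x_j)-1$ and $T=E(R(\tau))\setminus E(\tau)$, the sum over $S$ collapses to $\prod_{\{i,j\}\in E(R(\tau))\setminus E(\tau)}w(x_i,x_j)$. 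Reassembling the two factors gives exactly \eqref{eq:part.sch.1}.

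I do not expect a genuine obstacle: the content of the proposition is precisely that a partition scheme reorganizes the alternating sum over connected graphs into a sum over trees carrying the indicated ``decoration'' factors, and the proof is the short telescoping above. The only points needing a word of care are that the weights $w(x_i,x_j)-1$ are attached to edges of $K_n$, so the two factorizations used are legitimate, and that properties (i) and (ii) of the partition scheme are each used in exactly the spot indicated. No analytic or convergence issues arise, since $n$ is finite and the whole computation is a finite sum of finite products.
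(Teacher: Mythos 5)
Your proof is correct and follows exactly the same line as the paper's: split the sum over connected spanning graphs into the disjoint Boolean intervals $[\tau,R(\tau)]$, factor out the tree contribution, and collapse the remaining sum over subsets of $E(R(\tau))\setminus E(\tau)$ via $\sum_{S\subseteq T}\prod_{e\in S}a_e=\prod_{e\in T}(1+a_e)$. The only difference is presentational—you spell out the subset-sum identity that the paper leaves implicit in its last equality.
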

\begin{proof}
The proof of this proposition is well known \cite{ScSo05}, but we reproduce it here for the sake of completeness.  Let us denote $f(i,j) :=w(x_i,x_j)-1$.  Then, 
\begin{eqnarray}
\sum_{\mathrm{g}\in\mathcal{C}[n]}\prod_{\{i,j\}\in E(g)}f(i,j)
&=&\sum_{\tau\in\trees[n]}\sum_{\mathrm{g}\in [\tau,R(\tau)]}\prod_{\{i,j\}\in E(g)}f(i,j)\nonumber\\
&=&\sum_{\tau\in\trees[n]}\prod_{\{i,j\}\in E(\tau)}f(i,j)
\sum_{\mathrm{g}\in [\tau,R(\tau)]}\prod_{\{i,j\}\in E(\mathrm{g})\setminus E(\tau)}f(i,j)\nonumber\\
&=&\sum_{\tau\in\trees[n]}\prod_{\{i,j\}\in E(\tau)}f(i,j)
\prod_{\{i,j\}\in E(R(\tau))\setminus E(\tau)}\bigl[f(i,j)+1\bigr]\nonumber\;.
\end{eqnarray}
\end{proof}

The usual way to prove that a map $R$ is indeed a partition scheme is by defining an inverse operation $T$ that maps intervals of subgraphs into a tree.  This strategy is formalized by the following well-known proposition.

\begin{proposition} \cite{ScSo05} \label{prop:rr.1}
The following statements are equivalent
\begin{itemize}
\item[(a)] There are two maps
\[\begin{tikzcd}[every arrow/.append style={shift left}]
  \mathcal{T}_{\mathbb{G}}\arrow{r}{R} & \mathcal{C}_{\mathbb{G}}\arrow{l}{T} 
 \end{tikzcd}\]
such that $T^{-1}(\tau)\;=\;\left\{G\in\mathcal{C}:\tau\le G\le R(\tau)\right\}$.
\item[(b)] $R$ is a partition scheme in $\mathcal{C}_{\mathbb{G}}$.
\end{itemize}
\end{proposition}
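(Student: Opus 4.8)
The plan is to prove the two implications separately; in both directions the content is that a partition scheme and the map $T$ record the same information, namely into which block of the partition a given connected spanning subgraph falls. Suppose first that $R$ is a partition scheme, so that $\mathcal{C}_{\mathbb{G}}=\bigsqcup_{\tau\in\mathcal{T}_{\mathbb{G}}}[\tau,R(\tau)]$ as a disjoint union covering all of $\mathcal{C}_{\mathbb{G}}$. Then each $G\in\mathcal{C}_{\mathbb{G}}$ lies in exactly one interval $[\tau,R(\tau)]$, and I would define $T(G)$ to be that unique $\tau$. The resulting map $T\colon\mathcal{C}_{\mathbb{G}}\to\mathcal{T}_{\mathbb{G}}$ is well defined, and by disjointness one checks both inclusions $T(G)=\tau\Leftrightarrow G\in[\tau,R(\tau)]$, so that $T^{-1}(\tau)=[\tau,R(\tau)]=\{G\in\mathcal{C}_{\mathbb{G}}:\tau\le G\le R(\tau)\}$, which is statement (a). (Property (i) of a partition scheme is precisely what guarantees $\tau\in[\tau,R(\tau)]$, making each block nonempty and the construction consistent.)

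Conversely, suppose we are given $R$ and $T$ as in (a). Since the fibers of any function partition its domain, the sets $T^{-1}(\tau)=\{G:\tau\le G\le R(\tau)\}$, $\tau\in\mathcal{T}_{\mathbb{G}}$, are pairwise disjoint and cover $\mathcal{C}_{\mathbb{G}}$, which is exactly property (ii) in the definition of a partition scheme. To obtain property (i), namely $E(\tau)\subseteq E(R(\tau))$ for every $\tau\in\mathcal{T}_{\mathbb{G}}$, I would apply $T$ to a tree $\tau$ itself: then $\tau\in T^{-1}(T(\tau))$, so $T(\tau)\le\tau$; but $T(\tau)$ and $\tau$ are both spanning trees of $\mathbb{G}$ and hence have the same number of edges, so the edge inclusion $E(T(\tau))\subseteq E(\tau)$ is forced to be an equality and $T(\tau)=\tau$. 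Consequently $\tau\in T^{-1}(\tau)=\{G:\tau\le G\le R(\tau)\}$, which yields $\tau\le R(\tau)$, i.e. property (i). Thus $R$ satisfies both defining properties and is a partition scheme, completing the equivalence.

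The only step that is more than bookkeeping is the verification of property (i) in the direction (a)$\Rightarrow$(b): one must exclude the possibility that $T$ ``moves blocks sideways'', so that a root tree $\tau$ fails to lie in its own interval $[\tau,R(\tau)]$. The resolution rests on the elementary observation that, on a fixed vertex set, distinct spanning trees are incomparable under edge inclusion, so $T$ is forced to fix every tree; this is the small point on which the whole equivalence turns, and everything else is the routine fact that a map is recovered from the partition of its domain into fibres.
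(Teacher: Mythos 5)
Your proof is correct and complete. The paper does not actually supply its own argument for this proposition (it defers to Scott and Sokal), so there is nothing to compare against; your write-up fills that gap. Both directions are handled properly: the $(b)\Rightarrow(a)$ direction is the routine construction of $T$ from the partition, and in $(a)\Rightarrow(b)$ you correctly identify the one nontrivial point, namely that distinct spanning trees on a common vertex set are incomparable under edge inclusion (same edge count $|V|-1$), which forces $T(\tau)=\tau$ and hence $\tau\le R(\tau)$, giving property (i). Property (ii) then follows immediately from the fact that the fibres $T^{-1}(\tau)=[\tau,R(\tau)]$ of the map $T$ partition its domain $\mathcal{C}_{\mathbb{G}}$.
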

The proof of this proposition can be consulted, for instance, in \cite{ScSo05}.

Three types of partition schemes have been used to study convergence of cluster expansions: Penrose scheme \cite{FP,Pen67,ScSo05},  exploratory schemes proposed by Temmel \cite{Tem14} and the Kruskal partition introduced by Procacci and Yuhjtman \cite{PY17}.

 \addcontentsline{toc}{section}{\bf References}

\end{document}